\documentclass[11pt]{article}
\usepackage[margin = 1in]{geometry}
\usepackage{amsthm,amssymb,amsmath,graphicx,color,mathrsfs,booktabs}
\usepackage[hyphens]{url}
\usepackage{algpseudocode,algorithm,algorithmicx}
\usepackage[colorinlistoftodos,prependcaption,textsize=tiny]{todonotes}
\usepackage[colorlinks = true,
            linkcolor = blue,
            urlcolor  = blue,
            citecolor = blue,
            anchorcolor = blue]{hyperref}
\usepackage{algorithm, algpseudocode, mathbbol} 

\newcommand{\HH}{\mathcal{H}}

\usepackage[numbers,sort]{natbib}

 \bibpunct[, ]{[}{]}{,}{n}{}{,}%

\newcommand{\dmax}{\displaystyle\max}
\newcommand{\dsum}{\displaystyle\sum}

\newcommand{\R}{{\mathbb R}}

\newcommand{\supp}{\operatorname{supp}}
\newcommand{\diag}{\operatorname{diag}}

\renewcommand{\S}{{\mathcal S}}

\newcommand{\X}{{\mathcal X}}
\newcommand{\M}{{\mathcal M}}

\newcommand{\N}{\mathbb{N}}

\newcommand{\vertii}[1]{{\vert\kern-0.25ex\vert\kern-0.25ex\vert #1     \vert\kern-0.25ex\vert\kern-0.25ex\vert}}
\usepackage{enumitem}

\DeclareMathOperator*{\argmin}{arg\,min}
\DeclareMathOperator*{\argmax}{arg\, max}

\DeclareMathOperator{\st}{s.t.}

\newtheorem{theorem}{Theorem}
\newtheorem{proposition}{Proposition}

\newtheorem{corollary}{Corollary}

\newtheorem{remark}{Remark}

\newcommand{\G}{\mathcal G}

\newcommand{\tr}{^\intercal}

\title{Characterization of QUBO reformulations for the maximum $k$-colorable subgraph problem}
\author{
Rodolfo Quintero\thanks{Industrial and Systems Engineering, Lehigh University, USA {\tt roq219@lehigh.edu}} 
\and David Bernal\thanks{Department of Chemical Engineering,
Carnegie Mellon University, USA, {\tt bernalde@cmu.edu}}
\and Tam\'as Terlaky\thanks{Department of Industrial and Systems Engineering, Lehigh University, USA, {\tt
terlaky@lehigh.edu}}
\and Luis F. Zuluaga\thanks{Department of Industrial and Systems Engineering, Lehigh University, USA, {\tt
luis.zuluaga@lehigh.edu}}}

\begin{document}

\maketitle

\abstract{
Quantum devices can be used to solve constrained combinatorial optimization (COPT) problems thanks to the use of penalization methods to embed the COPT problem's constraints in its objective to obtain a quadratic unconstrained binary optimization (QUBO) reformulation of the COPT. However, the particular way in which this penalization is carried out, affects the value of the penalty parameters, as well as the number of additional binary variables that are needed to obtain the desired QUBO reformulation. In turn, these factors substantially affect the ability  of quantum computers to efficiently solve these constrained COPT problems. This efficiency is key towards the goal of using quantum computers to solve constrained COPT problems more efficiently than with classical computers. Along these lines, we consider an important constrained COPT problem; namely, the maximum $k$-colorable subgraph (M$k$CS) problem, in which the aim  is to find an induced $k$-colorable subgraph with maximum cardinality in a given graph. 
This problem arises in channel assignment in spectrum sharing networks, VLSI design, human genetic research, and cybersecurity.
We derive two QUBO reformulations for the M$k$CS problem, and fully characterize the range of the penalty parameters that can be used in the QUBO reformulations. Further, one of the QUBO reformulations of the  M$k$CS problem is obtained without the need to introduce additional binary variables. To illustrate the benefits of obtaining and characterizing these QUBO reformulations, we benchmark different QUBO reformulations of the M$k$CS problem by performing numerical tests on D-Wave's quantum annealing devices. These tests also illustrate the numerical power gained by using the latest D-Wave's quantum annealing device.

\section{Introduction}
\label{sec.intro}

Quantum computing (QC) harnesses the properties physical systems described by quantum mechanics (e.g., subatomic particles) to perform computations in a fundamentally different way than classical computing~\citep{canpress}. It is widely established that QC can, in the future, revolutionize the way we perform and think about computation, and be the backbone of thrilling new technologies and products~\citep{press,canpress,nytimes}. 

In particular, QC has the potential to radically transform our capability to solve  difficult optimization problems for which no traditional numerical or theoretical efficient solution algorithms are known to exist~\citep{montanaro2016quantum}. This is particularly the case for {\em combinatorial optimization} (COPT) problems; that is, optimization problems that are formulated with the use of discrete (e.g., binary) decision variables~\citep{conforti2014integer}. A large number of COPT problems are known to be NP-Hard~\citep[see, e.g.,][]{karp1972reducibility}; that is, there is no known polynomial-time algorithm that can be used to solve them. A very representative problem in this class of COPT NP-Hard problems is the {\em Ising model}~\citep[see, e.g.,][]{brush1967history, singh2020ising,cipra2000ising}. Since its inception, the Ising model has been used to address problems arising in different physical systems (e.g., magnetism, lattice gas, spin glasses), as well as in neuroscience and socio-economics. 

The Ising model belongs to the class of {\em quadratically unconstrained binary optimization} (QUBO) problems~\citep[see,e.g.,][]{pajouh2013characterization}.  Moreover,
both quantum annealing devices~\citep[see, e.g.,][]{lucas2014ising, choi2008minor, johnson2011quantum}, and algorithms (such as the quantum approximate optimization algorithm (QAOA)) for gate-based quantum computers~\citep[see, e.g.,][]{farhi2016quantum,wang2018quantum} are able to address the solution of QUBO problems. This allows the use of quantum technology to solve problems such as the Ising model and the max-cut problem, which has a natural QUBO reformulation~\citep[see, e.g.,][]{farhi2014quantum, king2014algorithm}.
Moreover, quantum technology can be used to solve a broader class of constrained COPT problems that do not have a natural QUBO reformulation. This is due to the fact that penalization methods can be used to embed the COPT problem's constraints in its objective to obtain a QUBO reformulation of the problem.

For some COPT feasibility  problems (i.e., without an objective) that can be formulated using linear equality constraints, the desired QUBO reformulation can be obtained using any positive penalty parameter (to penalize the constraints' violations). For example, consider the QUBO reformulations of the number partitioning problem~\citep{lucas2014ising, nannicini2019performance}, the graph isomorphism problem~\citep{calude2017}, the exact cover problem~\citep{lucas2014ising}, and some planning problems~\citep{rieffel2015case}, to name a few. However, when the COPT problem formulation requires (or uses) nonlinear constraints and/or an objective function, the desired QUBO reformulation is only guaranteed to be obtained for 
values of the penalty parameter(s) that are larger than a known, and potentially large, lower bound. For example, consider the QUBO reformulations for the maximum clique problem~\citep{lucas2014ising}, the traveling salesman problem~\citep{lucas2014ising, nannicini2019performance}, and the minimax matching problem~\citep{lucas2014ising}. Worst, in some cases, the desired QUBO reformulation is only guaranteed to be obtained for an unknown large enough
value of the penalty parameter(s). For example, consider the QUBO reformulations of the job shop scheduling problem~\citep{venturelli2016job}, the de-conflicting optimal trajectories problem~\citep{stollenwerk2019quantum}, the traveling salesman problem with time windows~\citep{papalitsas2019qubo}, and some of the problems discussed in~\citep{glover2019quantum}. Additionally, when the COPT problem formulation requires (or uses) linear inequality constraints, a potentially large number of auxiliary (i.e., slack) binary variables need to be introduced to obtain the desired QUBO reformulation. For example, consider the maximum clique QUBO reformulation provided in~\citep{lucas2014ising}, and the COPT problems considered in~\citep{vyskovcil2019embedding}.

The fact that large (or unknowingly large) penalty parameters, and additional binary variables might be needed to obtain the desired QUBO reformulation can hinder the ability  of quantum computers to  more efficiently solve COPT problems~\citep[see, e.g.,][]{ruan2020quantum, vyskovcil2019embedding,fowler2017improved}. As the results in~\citep{guerreschi2019qaoa} highlight, this efficiency is key towards the goal of using {\em noisy intermediate scale quantum} (NISQ) devices to solve COPT problems more efficiently than with classical computers. Not surprisingly, recent articles look beyond obtaining QUBO reformulations of COPT problems such as the graph isomorphism problem as well as tree and cycle elimination problems, to look for {\em improved}   QUBO reformulations of these problems for NISQ devices~\citep[see, e.g.,][]{calude2017, hua2020improved, fowler2017improved, verma2020optimal, verma2020penalty}. That is, QUBO reformulations that are tailored to be more efficiently used in NISQ devices.

Along these lines,  we consider an important COPT problem; namely, the {\em maximum $k$-colorable subgraph} (M$k$CS) problem~\citep[see, e.g.,][]{kuryatnikova2020maximum}, in which the aim  is to find an induced $k$-colorable subgraph
with maximum cardinality in a given graph. 
This problem arises in channel assignment in spectrum sharing networks (e.g., Wi-Fi or cellular)~\citep{subramanian2007fast,halldorsson2010spectrum}, VLSI design~\citep{fouilhoux2012solving}, human genetic research~\citep{lippert2002algorithmic, fouilhoux2012solving}, telecommunications~\citep{lovasz1979shannon}, and cybersecurity~\citep{berman1990distributed}.

We derive two QUBO reformulations of the M$k$CS problem. The first one is obtained from the standard formulation of the M$k$CS problem in which all the constraints are linear, except for the binary variable constraints. This QUBO reformulation is an improved version of the QUBO reformulation that would be obtained by using the QUBO reformulation approach of~\citet{lasserre2016max}
for this ``linear'' formulation of the M$k$CS. The reason for this is that we characterize the minimum penalization coefficients that can be used to guarantee that the desired QUBO problem, obtained by penalizing the problem's linear constraints violations, is indeed equivalent to the original problem. 
Furthermore, we characterize the equivalence of the QUBO reformulation not only in terms of the objective value, but also in terms of the optimal solution obtained from this QUBO reformulation. In particular, we find that when the minimal values of the penalization coefficients are used, the QUBO reformulation is equivalent to the M$k$CS in terms of the problems' objectives, but not in terms of the problems' optimal solutions. However, we show that in this case, the QUBO reformulation's optimal solution can be used, in a simple way, to obtain the M$k$CS problem's optimal solution. In what follows, we will refer to this QUBO reformulation of the M$k$CS problem as the {\em linear-based} QUBO reformulation.

The second QUBO reformulation of the M$k$CS problem is obtained from a formulation of the M$k$CS problem in which all the linear constraints are first formulated as nonlinear equality constraints. Analogous to the results obtained for the linear-based QUBO reformulation of the M$k$CS problem, we derive a {\em nonlinear-based} QUBO reformulation of the M$k$CS problem. Then, we characterize the minimum penalizations coefficients that can be used to guarantee that the desired nonlinear-based QUBO problem, 
obtained by penalizing the problem's linear constraints violations, is indeed equivalent to the original problem. Furthermore, we characterize the equivalence of the nonlinear-based QUBO reformulation not only in terms of the objective value, but also in terms of the optimal solution obtained from this nonlinear-based QUBO reformulation. In particular, we find that when the minimal values of the penalization coefficients are used, the nonlinear-based QUBO reformulation is equivalent to the M$k$CS in terms of the problems' objectives, but not in terms of the problems' optimal solutions. However, we show that in this case, the nonlinear-based QUBO reformulation's optimal solution can be used, in a simple way, to obtain the M$k$CS problem's optimal solution. This latter result extends the work done in characterizations of QUBO reformulations of the {\em stable set problem}~\citep{harant2000some, abello2001finding, boros2007local}, which is equivalent to the M$k$CS problem when $k=1$.
The nonlinear-based QUBO reformulation of the M$k$CS problem is a substantial improvement over the linear-based QUBO reformulation of the M$k$CS problem, in significant part, because the former QUBO does not need the addition of any auxiliary (i.e., slack) binary variables beyond the ones that define the original problem's formulation.

To illustrate the benefits of obtaining and characterizing these QUBO reformulations, we benchmark different QUBO reformulations of the M$k$CS problem using a quantum annealing device, and in particular, we look at how embedding requirements and theoretical and numerical convergence rates change depending on the QUBO reformulation being used, as well as the parameters with which is used.

The rest of the article is organized as follows. In Section~\ref{sec:prelim}, we present some relevant discussion to motivate our work, as well as results about QUBO reformulations for COPT problems. In Section~\ref{sec:MkCS}, we formally present the M$k$CS problem and two associated QUBO reformulations. The first one, in Section~\ref{sec:linear}, is based on a ``linear'' (modulo the binary variable constraints) formulation of the M$k$CS problem. The second one, in Section~\ref{sec:nonlinear}, is based on a ``nonlinear'' (beyond the binary variable constraints) formulation of the M$k$CS problem. In Section~\ref{sec:benchmark}, we benchmark these two QUBO reformulation by performing numerical tests on D-Wave's quantum annealing devices. We also illustrate the numerical power gained by using the latest D-Wave's quantum annealing devices.  In Section~\ref{sec:end}, we finish with some concluding remarks.

\section{Preliminaries}
\label{sec:prelim}

Formally, given a set of $n$ binary decision variables $x \in \{0,1\}^n$ (or $x \in \{-1,1\}^n$) when appropriate), a vector $f \in \R^n$, and a matrix $Q \in \S^n$, where~$\S^n$ is the set of symmetric matrices in $\R^{n\times n}$, a {\em quadratically unconstrained binary optimization} (QUBO) problem is the problem of finding~\citep[see, e.g.,][]{pajouh2013characterization, boros2007local}:
\begin{equation}
\label{eq:qubo}
\tag{QUBO}
\begin{array}{llllllll}
z^* = & \min & x\tr Q x + f\tr x \\
         & \st & x \in \{0,1\}^n.
\end{array}
\end{equation}
It is well-known that the Ising model belongs to the class of QUBO problems (using $\{-1,1\}$ binary variables)~\citep[see, e.g.,][]{lucas2014ising,fowler2017improved}. 
Moreover, other distinguished NP-Hard COPT problems can be naturally formulated, or easily reformulated as a QUBO problem. Foremost among this type of problems is the max-cut problem~\citep[see, e.g.,][]{goemans1995improved}, which arises in multiple  important applications in science and engineering~\citep[see, e.g.,][Sec. 6]{poljak1995maximum}.
Given an undirected graph $G(V,E)$, the aim in the max-cut problem is to find a subset of nodes (or cut) $S \subseteq V$, such that the cardinality of the set of edges in $E$ between the nodes in $S$ and $S^c := V \setminus S$ is maximized. The max-cut problem can be naturally formulated (disregarding objective constants) as a \ref{eq:qubo}  problem (using $\{-1,1\}$ binary variables) by letting $Q = A$, $f=0$, where $A \in \R^{V\times V}$ is the node-to-node adjacency matrix of $G(V,E)$, or by setting
$Q=-\diag(Ae)+2A$ and $f=0$ (using  $\{0,1\}$ binary variables).

Thanks to the \ref{eq:qubo}  reformulation of the max-cut problem, the ability of quantum computers to solve the max-cut problem has been widely studied in the literature. For example, consider the use 
QAOA algorithms in~\citep{farhi2016quantum, wang2018quantum, crooks2018performance}, and of quantum annealing devices in~\citep{king2015benchmarking, king2014algorithm} to solve instances of the max-cut problem. Furthermore, QUBO reformulations can be obtained for a broader class of COPT problems that do not have a natural QUBO reformulation. This is done by using penalization methods to embed the COPT problem's constraints in its objective~\citep[see, e.g.,][to name just a few]{lucas2014ising, nannicini2019performance, calude2017, fowler2017improved, rieffel2015case, venturelli2016job, stollenwerk2019quantum, glover2019quantum, vyskovcil2019embedding}. This approach clearly broadens the class of COPT problems that can be addressed with NISQ devices. However, the efficacy of NISQ devices to solve this broader class of COPT problems can be highly affected by the way in which the corresponding QUBO reformulation is obtained. This is because the performance of NISQ devices is highly affected by the number of qubits and the coefficients that are required to encode a QUBO~\citep[see, e.g.,][]{calude2017, hua2020improved, fowler2017improved}.

To illustrate this fact, consider the problem of obtaining a QUBO reformulation for the {\em maximum clique} problem. Given an undirected graph $G(V, E)$, the aim in the maximum clique problem is to find the set of nodes $S \subseteq V$ with the highest cardinality such that the graph induced by $S$ is a clique; that is, a complete subgraph~\citep[see, e.g.,][]{pardalos1994maximum}. The cardinality of the largest induced clique of $G$ is referred to as the clique number $\chi(G)$. \citet[][Sec. 2.3]{lucas2014ising} obtains a QUBO reformulation for the maximum clique problem by first noticing that $G(V, E)$ contains a clique of size $K \in \{2,\dots,|V|\}$ (i.e., w.l.o.g. assume $|E| \ge 1$) if and only if there is $x \in \{0,1\}^{|V|}$ such that $\sum_{i=1}^{|V|}x_i = K$, and $\sum_{(i,j) \in E} x_ix_j = \frac{1}{2}K(K-1)$. Thus, the maximum clique problem can be formulated as $\chi(G) = \max\{K \in \{2,\dots, |V|\}: \sum_{i=1}^{|V|}x_i = K, \sum_{(i,j) \in E} x_ix_j = \frac{1}{2}K(K-1),x \in \{0,1\}^{|V|}\}$. Furthermore, \citet[][Sec. 2.3]{lucas2014ising} shows that this latter problem can be reformulated as the following QUBO.
\begin{equation}
\label{eq:lucasclique}
\begin{array}{lllllll}
\chi(G) =  & \min & - \dsum_{i=1}^{|V|}x_i + (\Delta + 2)\left(1-\dsum_{k=2}^\Delta y_k \right)^2 + 
(\Delta + 2)\left(\dsum_{k=2}^\Delta ky_k  - \dsum_{i=1}^{|V|}x_i \right)^2 + \\
 & & \frac{1}{2}\left ( \dsum_{k=2}^\Delta ky_k \right ) \left ( -1  + \dsum_{k=2}^\Delta ky_k \right ) - \dsum_{(i,j) \in E} x_ix_j \\
 & \st & x \in \{0,1\}^{|V|}, y_k \in \{0,1\}, k=2,\dots,\Delta,
\end{array}
\end{equation}
where $\Delta$ is the degree of $G(V,E)$, and the auxiliary variable $y_k =1$ if $\chi(G) = k$ and $y_k = 0$ otherwise for $k=2,\dots,\Delta$. Note that the QUBO problem~\eqref{eq:lucasclique} uses $|V| + \Delta$ logical qubits and {\em coefficients}~\citep[][Section 1.2]{vyskovcil2019embedding} that belong to the range $[-2\Delta(\Delta+2), 2\Delta^3+ 3\Delta(\Delta-1)+4]$ (after disregarding constant terms and appropriately replacing $x_i \to x_i^2$, $i=1,\dots,|V|$, $y_k \to y_k^2$, $k=2,\dots,\Delta$ in the objective of~\eqref{eq:lucasclique} to make it a homogenous quadratic). The performance of NISQ devices on solving QUBO problems is negatively affected by the use of a larger number of logical qubits and larger coefficients~\citep[see, e.g.,][]{vyskovcil2019embedding, glover2019quantum, calude2017, hua2020improved, fowler2017improved}. In this context, it is natural to ask if there are {\em improved}~\citep[see, e.g.,][]{calude2017, hua2020improved, fowler2017improved, verma2020optimal, verma2020penalty} QUBO reformulation for the maximum clique problem. For example, notice that by slightly changing the definition and number of the auxiliary variables in~\eqref{eq:lucasclique}, the range of the coefficients used in~\eqref{eq:lucasclique} can be substantially reduced. Namely, let $y \in \{0,1\}^{\Delta}$ be defined by $\sum_{k=1}^{\Delta} y_k = K$ if $\chi(G) = K$ for $K \in \{1,\dots,\Delta\}$. Then the maximum clique problem is equivalent to:
\begin{equation}
\label{eq:lucascliquemod}
\begin{array}{lllllll}
\chi(G) = & \min & - \dsum_{i=1}^{|V|}x_i + 
(\Delta + 2)\left(\dsum_{k=1}^\Delta y_k  - \dsum_{i=1}^{|V|}x_i \right)^2 + \\
&  & \frac{1}{2}\left ( \dsum_{k=1}^\Delta y_k \right ) \left ( -1  + \dsum_{k=1}^\Delta y_k \right ) - \dsum_{(i,j) \in E} x_ix_j \\
 & \st & x \in \{0,1\}^{|V|}, y \in \{0,1\}^{\Delta}.
\end{array}
\end{equation}
Note that the QUBO problem~\eqref{eq:lucascliquemod} uses coefficients that belong to a much smaller range $[-2(\Delta+2), 4(\Delta+2)+1]$ than the range of coefficients used in the QUBO problem~\eqref{eq:lucasclique} (after disregarding constant terms and appropriately replacing $x_i \to x_i^2$, $i=1,\dots,|V|$, $y_k \to y_k^2$, $k=1,\dots,\Delta$ in the objective of~\eqref{eq:lucasclique} to make it an homogenous quadratic). However, a much better QUBO formulation for the maximum clique problem can be obtained by using the fact that $\chi(G) = \alpha(G^c)$~\citep[see, e.g.,][]{pardalos1994maximum}, where for a graph~$G(V, E)$, $G^c = G(V, E^c)$ is the complement of $G$, and $\alpha(G)$ stands for the {\em stable set number} of the graph~$G$~\citep[see, e.g.,][]{harant2000some}; that is, the size of the largest cardinality set $S \subseteq V$, such that there are no edges between the nodes in $S$. This fact can be used to show that (see, e.g., \citep[][Thm. 6]{calude2017} or \citep[][Thm. 2.3]{pardalos1994maximum}, among others)
\begin{equation}
\label{eq:easyclique}
\chi(G) = \alpha(G^c)= \min \left \{- \dsum_{i=1}^{|V|}x_i + 2\dsum_{(i,j) \not \in E} x_ix_j: x \in \{0,1\}^{|V|} \right \}
\end{equation}
Note that the QUBO problem~\eqref{eq:easyclique} uses $|V|$ logical qubits and coefficients that belong to the range $\{-1, 2\}$. Thus, in terms of number of logical qubits and range of the coefficients used in the QUBO reformulation,~\eqref{eq:easyclique} improves both~\eqref{eq:lucascliquemod} and~\eqref{eq:lucasclique}. It is worth pointing out that the QUBO reformulation~\eqref{eq:easyclique} has been stated in numerous articles~\citep[see, e.g.,][to name a few]{chapuis2017finding, abello2001finding, yarkoni2018first, pardalos1994maximum}. Moreover, it is well known that the range of the coefficients in~\eqref{eq:easyclique} can be further reduced to $\{-1,1\}$. Namely, it has been proved (or stated) in numerous articles~\citep[see, e.g.,][]{nannicini2019performance, wocjan20032, boros2007local, harant2000some, abello2001finding, pajouh2013characterization} that

\begin{equation}
\label{eq:bestclique}
\chi(G) = \alpha(G^c)= \min \left \{- \dsum_{i=1}^{|V|}x_i + \dsum_{(i,j) \not \in E} x_ix_j: x \in \{0,1\}^{|V|} \right \}
\end{equation}

There is, however, a caveat in the QUBO reformulation~\eqref{eq:bestclique}. For any $x \in \R^n$, let $\supp(x) = \{i \in \{1, \dots,n\}: x_i \neq 0\}$. Unlike for~\eqref{eq:lucasclique}--\eqref{eq:easyclique}, given $
x^* \in \argmin\{\eqref{eq:bestclique}\}$, $\supp(x^*)$ might not be a clique on $G$ (nor an independent set in $G^c$). That is, while the QUBO problems~\eqref{eq:lucasclique}--\eqref{eq:easyclique} are equivalent to the maximum clique problem in terms of both objective value and (loosely speaking) optimal solution, in general, the QUBO problem~\eqref{eq:bestclique} is equivalent to the maximum clique problem {\em only} in terms of objective value. This important topic will be revisited and discussed in detail in Section~\ref{sec:nonlinear}.

Along these lines, in what follows, we consider the problem of obtaining not only a QUBO reformulation, but improved QUBO reformulation of a keystone COPT problem; 
namely, the {\em maximum $k$-colorable subgraph} (M$k$CS) problem~\citep[see, e.g.,][]{kuryatnikova2020maximum}.

\section{The $k$-subgraph coloring problem}
\label{sec:MkCS}

Let $k\geq 1$ colors and a graph $G=(V,E)$ on $n$ vertices be given. A subgraph $H$ of $G$ is $k$-colorable if we can assign to each vertex of $H$ a color such that no two adjacent vertices in $H$ have the same color. The maximum  $k$-colorable subgraph problem (M$k$CS) aims at 
finding a $k$-colorable subgraph $H$ of $G$ with maximum cardinality. To model this problem, notice that any $k$-coloring of a subgraph of $G$ can be encoded in the following way. For any $i\in [n]$ (where for any $t \in \N$, $[t]:=\{1,\dots,t\}$) and $r\in [k]$, let
    \begin{equation} \label{Matrix k coloring}   
 x_{ir} = \begin{cases}
1, & \text{ if vertex } i \in [n] \text{ is colored with color } r \in [k],\\
0, & \text{otherwise}. 
\end{cases}
\end{equation}
Then,  $x\in\{0,1\}^{n\times k}$ defines a $k$-coloring of a subgraph of $G$ if and only
\begin{align}
\label{eq:linearcons}
\begin{split}
 x_{ir} + x_{jr} \leq 1, & \text{ for all } (i,j)\in E, r\in [k],\\
    \sum_{r\in [k]} x_{ir} \leq 1, & \text{ for all } i \in [n].   
\end{split}
\end{align}
Then, the M$k$CS can be formulated as~\cite[see, e.g.,][]{kuryatnikova2020maximum}: 
\begin{equation}
\label{eq:MkCS}
\begin{array}{lllllll}
    \alpha_k(G):= & \dmax_{x\in \{0,1\}^{n\times k}} & \dsum_{i\in [n], r\in [k]} x_{ir} \\[2ex]
    & \st & x_{ir} + x_{jr} \leq 1, & \text{for all } (i,j) \in E, r\in [k],\\
     & & \dsum_{r\in [k]} x_{ir} \leq 1, & \text{for all } i \in [n]. 
\end{array}
\end{equation}
The M$k$CS problem falls into the class of NP-complete problems~\citep{yannakakis1987maximum}. Moreover, even approximating this problem is known to be NP-hard~\citep{lund1993approximation}. For $k=1$, the  M$k$CS is equivalent to the maximum stable set problem (i.e., $\alpha_1(G) = \alpha(G)$) that has been widely and thoroughly studied in the literature; and in particular, in the quantum computing literature~\citep[see, e.g.,][]{nannicini2019performance, wocjan20032, chapuis2017finding}. The cases $k=2$, which is also referred to as the maximum bipartite subgraph problem,
and $k>2$ are considered significantly less in the literature~\citep[see][for details]{kuryatnikova2020maximum}. However, as mentioned earlier, the M$k$CS problem arises in channel assignment in spectrum sharing networks (e.g., Wi-Fi or cellular)~\citep{subramanian2007fast,halldorsson2010spectrum}, VLSI design~\citep{fouilhoux2012solving}, human genetic research~\citep{lippert2002algorithmic, fouilhoux2012solving}, telecommunications~\citep{lovasz1979shannon}, and cybersecurity~\citep{berman1990distributed}. Thus, a range
of approaches have been studied in the literature to address the solution of the M$k$CS problem, for example, using semidefintie optimization techniques~\citep[see, e.g.,][]{van2016new, kuryatnikova2020maximum} or integer programming techniques~\citep[see, e.g.,][]{campelo2010combined, januschowski2011maximum, calude2017}. 

Next, we obtain and characterize QUBO reformulations for the M$k$CS problem that allow to address its solution using quantum technology. Before presenting these results, let us mention some additional facts about the M$k$CS problem that will be relevant to the discussion in what follows. 

Notice that 
a M$k$CS $H$ of $G(V,E)$ can be recovered from any $x^* \in \argmax\{ \alpha_k(G)\}$; that is, $H:=G(V_H, E_H)$, where $V_H = \{i \in [n]: x^*_{ir} > 0\text{ for some }r \in [k]\}$, $E_H:= \{(i,j) \in E: i, j \in V_H\}$, and the coloring of the vertices is obtained by coloring vertex~$i \in V_H$ with color $r \in [k]$ if and only if $x^*_{ir} = 1$. Furthermore, given $\tilde{x} \in \{0,1\}^{n \times k}$, it is very simple to obtain a feasible solution $x' \in \{0,1\}^{n \times k}$ for the M$k$CS problem by sequentially dropping color $r' \in [k]$ from vertex $i' \in [n]$; that is, setting $\tilde{x}_{i'r'} = 0$, if $\tilde{x}_{i'r'} = 1$ and there exists $(i',j) \in E$ such that $\tilde{x}_{i'r'}+\tilde{x}_{jr'} >1$ or 
$\sum_{r\neq r'} \tilde{x}_{i'r} \ge 1$. This simple fact is formally stated in Algorithm~\ref{alg:AlgorithmA}, in a particular form that will be helpful in stating some of the QUBO characterization results that follow.

\begin{figure}[htb]
\centering
\begin{minipage}{0.7\linewidth}
\begin{algorithm}[H]
\small
\caption{M$k$CS feasibility}
\begin{algorithmic}[1]
\State  \textbf{Input}{ $k \ge 1$, $G(V,E)$, $|V| = n$, $x \in \{0,1\}^{n \times k}$}
\For{$i\in [n]$, $(i,j) \in E$, $r \in [k]$}
	\If{$x_{ir} + x_{jr} > 1$}
	\State $x_{ir} \to 0$ \label{step:4}
	\EndIf
\EndFor
\For{$i\in [n]$, $r \in [k]$}
\If{$x_{ir} = 1$ and $\sum_{p  \neq r \in [k]}x_{ip} \ge 1$}
\State  $x_{ir} \to 0$ \label{step:9}
\EndIf
\EndFor
\State  \textbf{Output}{ $x' :=x$ a feasible solution for the M$k$SC problem}
\end{algorithmic}
\label{alg:AlgorithmA}
\end{algorithm}
\end{minipage}
\end{figure}

\subsection{Linear-based QUBO reformulation} 
\label{sec:linear}

Based on the formulation~\eqref{eq:MkCS} of the M$k$CS problem in which all the constraints, except for the binary variable constraints are {\em linear}, we can derive and characterize a {\em linear-based} QUBO reformulation for the M$k$CS problem. For that purpose, let us first introduce some notation. 

Given $k\geq 1$, a graph $G=(V,E)$ on $n$ vertices, and $x \in \{0,1\}^{n\times k}$, $s \in \{0,1\}^{|E| \times k}$, $t \in \{0,1\}^n$, let
\begin{equation}
\label{eq:h0}
H_0(x) =  \dsum_{i \in [n], r \in [k]} x_{ir}^2,
\end{equation}
and
\begin{subequations}
\label{eq:h1h2l}
\begin{align}
H_1^l(x, s) &= \dsum_{(i,j) \in E, r \in [k]} \left(x_{ir}+x_{jr}+s_{ijr}-1\right)^2, \label{eq:h1l}\\
H_2^l(x, t) & = \dsum_{i\in[n]} \left( \sum_{r\in [k]}x_{ir}+t_i-1\right)^2  \label{eq:h2l}.
\end{align}
\end{subequations}
Furthermore, we define the following simple mappings. Given $x \in \{0,1\}^{n \times k}$  and $i' \in [n], r'\in [k]$, let the mapping $\X_{i'r'}(x):  \{0,1\}^{n \times k} \to  \{0,1\}^{n \times k}$ 
be defined by
\begin{equation}
\label{eq:mapx}
x_{ir} \to \left \{ \begin{array}{ll} 
                                    0 & \text{if } i=i', r=r'\\
                                    x_{ir} & \text{otherwise}\\
                                    \end{array} \right . , i \in [n], r \in [k].
                                    \end{equation}
Note that $\X_{i'r'}(x)$ is a generalization of the mapping used on proofs regarding QUBO reformulations of the stable set number problem (i.e., M$1$CS)~\citep[see, e.g.,][]{boros2007local, harant2000some, abello2001finding, pajouh2013characterization,wocjan20032}. Here, however, to deal with the general case $k>1$, we need an additional mapping.

Given $p = \{0,1\}$, $s \in \{0,1\}^{|E| \times k}$, $t \in \{0,1\}^{n}$, and $(i',j') \in E$, $r' \in [k]$, let the mapping $\M^p_{i'j',r'}(s,t):  \{0,1\}^{|E| \times k + n} \to   \{0,1\}^{|E| \times k + n} $ be defined by
\begin{subequations}
\label{eq:mapst}
\begin{align}
s_{ijr} &\to \left \{ \begin{array}{ll} 
                                    1-s_{i'jr'} & \text{if } i=i', j \neq j', r=r'\\
                                    (1-s_{i'jr'})p             & \text{if } i=i', j = j', r=r'\\
                                    s_{ijr} & \text{otherwise}\\
                                    \end{array} \right . , (i,j) \in E, r \in [k],\label{eq:mapst_s}\\[2ex]
t_{i} &\to \left \{ \begin{array}{ll} 
                                    1-p & \text{if } i=i',\\
                                    t_{i} & \text{otherwise}\\
                                    \end{array} \right . , i \in [n]. \label{eq:mapst_t}
\end{align}
\end{subequations}               

With these definitions in hand, we can now obtain the desired linear-based QUBO reformulation of the M$k$CS problem. For any $c_1, c_2 > 0$ define the QUBO problem:
\begin{equation}
\label{eq:qubolinear}
\begin{array}{llll}
Q^l_{c_1,c_2}(k,G) := & \max & H^l_{c_1,c_2}(x,s,t):= H_0(x) - c_1H_1^l(x,s) - c_2H_2^l(x,t)\\[2ex]
   & \st & x \in \{0,1\}^{n\times k}, s\in\{0,1\}^{|E|\times k}, t\in \{0,1\}^n.
   \end{array}
\end{equation}
 
\begin{theorem}[linear-based QUBO reformulation of M$k$CS problem]
\label{thm:linearQUBO}
Let $k\geq 1$ and a graph $G=(V,E)$ on $n$ vertices be given.  Then, for any $c_1 >1, c_2 > 1$, $Q^l_{c_1,c_2}(k,G) = \alpha_k(G)$, and if $\tilde{x} \in \argmax_x\{Q^l_{c_1,c_2}(k,G)\}$ then $\tilde{x} \in \argmax\{\alpha_k(G)\}$.
\end{theorem}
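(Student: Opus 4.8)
The plan is to prove the two inequalities $Q^l_{c_1,c_2}(k,G) \ge \alpha_k(G)$ and $Q^l_{c_1,c_2}(k,G) \le \alpha_k(G)$ separately, and to extract the solution-recovery claim from the argument used for the second one.

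For the lower bound I would start from any $x^* \in \argmax\{\alpha_k(G)\}$ and exhibit a QUBO point attaining the same objective. Since $x^*$ satisfies the constraints of~\eqref{eq:MkCS}, the slacks $s^*_{ijr} := 1 - x^*_{ir} - x^*_{jr}$ and $t^*_i := 1 - \sum_{r\in[k]} x^*_{ir}$ are well defined and lie in $\{0,1\}$; with this choice $H_1^l(x^*,s^*) = H_2^l(x^*,t^*) = 0$, so $H^l_{c_1,c_2}(x^*,s^*,t^*) = H_0(x^*) = \sum_{i,r} x^*_{ir} = \alpha_k(G)$. This gives $Q^l_{c_1,c_2}(k,G) \ge \alpha_k(G)$ for all $c_1,c_2>0$.

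The reverse inequality together with the recovery statement is the heart of the proof, and I would obtain both at once through a local-improvement argument showing that a maximizer of the QUBO cannot carry any penalty. Suppose $(\tilde x,\tilde s,\tilde t)$ maximizes $H^l_{c_1,c_2}$ but $H_1^l(\tilde x,\tilde s)>0$ or $H_2^l(\tilde x,\tilde t)>0$. I would first dispatch the ``slack-only'' violations: a positive edge term with $\tilde x_{ir}+\tilde x_{jr}\le 1$, or a positive vertex term with $\sum_r \tilde x_{ir}\le 1$, can be driven to $0$ by flipping a single slack variable, which leaves $H_0$ unchanged and strictly increases $H^l_{c_1,c_2}$ (by $c_1$ or $c_2$), contradicting maximality. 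The remaining violations are the genuinely combinatorial ones: a monochromatic edge ($\tilde x_{i'r'}=\tilde x_{j'r'}=1$ for some $(i',j')\in E$) or an over-colored vertex ($\sum_r \tilde x_{i'r}\ge 2$). For these I would apply the color-dropping map $\X_{i'r'}$ together with the slack-repair map $\M^p_{i'j',r'}$, choosing $p$ according to whether vertex $i'$ retains a color after the drop. The maps are calibrated so that every constraint previously satisfied stays satisfied: the flips $s_{i'jr'}\to 1-s_{i'jr'}$ absorb the change of $\tilde x_{i'r'}$ on the edges incident to $i'$ at color $r'$, and the reset of $t_{i'}$ absorbs it on the vertex constraint of $i'$, while the penalty at the violated constraint strictly drops. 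Setting $\tilde x_{i'r'}=0$ costs exactly $1$ in $H_0$, but reduces $c_1 H_1^l + c_2 H_2^l$ by at least $c_1$ in the edge case or at least $c_2$ in the vertex case; since $c_1>1$ and $c_2>1$, the net change of $H^l_{c_1,c_2}$ is strictly positive, again contradicting maximality.

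Consequently every maximizer $(\tilde x,\tilde s,\tilde t)$ has $H_1^l=H_2^l=0$, so $\tilde x$ is feasible for~\eqref{eq:MkCS} and $Q^l_{c_1,c_2}(k,G)=H_0(\tilde x)=\sum_{i,r}\tilde x_{ir}\le \alpha_k(G)$; combined with the lower bound this yields $Q^l_{c_1,c_2}(k,G)=\alpha_k(G)$, and since the feasible $\tilde x$ attains this common value it lies in $\argmax\{\alpha_k(G)\}$, establishing recovery. I expect the main obstacle to be the exact bookkeeping in the combinatorial case: one must verify that the combined action of $\X_{i'r'}$ and $\M^p_{i'j',r'}$ touches only the constraints genuinely involving $x_{i'r'}$ (the edges at $i'$ of color $r'$ and the single vertex constraint of $i'$), that none of the slack flips create new positive penalty terms, and that the threshold $c_1,c_2>1$ rather than $\ge 1$ is exactly what is needed: at $c_1=1$ or $c_2=1$ the net change becomes nonnegative but possibly zero, so the optimal value remains correct while infeasible maximizers may survive, which is precisely why the strict inequalities are required for the solution-recovery half of the statement.
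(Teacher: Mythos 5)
Your proposal is correct and follows essentially the same route as the paper: the same slack construction ($s_{ijr}=1-x_{ir}-x_{jr}$, $t_i=1-\sum_r x_{ir}$) for the inequality $Q^l_{c_1,c_2}(k,G)\ge\alpha_k(G)$, and the same local-improvement contradiction via $\X_{i'r'}$ combined with $\M^0_{i'j',r'}$ (monochromatic edge) or $\M^1_{i'j',r'}$ (over-colored vertex), with the same accounting of a cost of $1$ in $H_0$ against a gain of at least $c_1$ or $c_2$ in the penalty. The only cosmetic difference is your extra step eliminating slack-only violations; the paper sidesteps it by proving only that $\tilde x$ is feasible for~\eqref{eq:MkCS}, after which the chain $H^l_{c_1,c_2}(\tilde x,\tilde s,\tilde t)\le H_0(\tilde x)\le \alpha_k(G)\le Q^l_{c_1,c_2}(k,G)$ forces equality throughout.
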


\begin{proof}
First, notice that $\tilde{x}$ is well defined and $Q^l_{c_1,c_2}(k,G)$ is attained as~\eqref{eq:qubolinear} is defined over a compact feasible set. Also, notice that for any $c_1, c_2 > 0$ and any feasible solution $x' \in \{0,1\}^{n\times k}$ for the M$k$CS problem~\eqref{eq:MkCS} with objective value $z(x'):= \sum_{i\in [n], r\in [k]} x_{ir}$, one can construct a feasible solution for~\eqref{eq:qubolinear}; that is,  $x = x'$, $s_{ijr} = 1-x'_{ir}-x'_{jr}$, for all $(i,j) \in E, r\in [k]$, and $t_i = 1 - \sum_{r \in [k]} x'_{ir}$, with objective value $H^l_{c_1,c_2}(x,s,t) = z(x')$. Thus, if $c_1, c_2 > 0$, the QUBO problem~\eqref{eq:qubolinear} is a  relaxation of~\eqref{eq:MkCS}, and consequently  $Q^l_{c_1,c_2}(k,G) \ge \alpha_k(G)$. Thus, to prove the result, it is enough to show that when $c_1, c_2 > 1$, one has that~$\tilde{x}$ is a feasible solution for~\eqref{eq:MkCS}. By contradiction, assume this is not the case and let $c_1, c_2 > 1$, $(\tilde{s}, \tilde{t}) :=\argmax_{(s,t)}\{Q^l_{c_1,c_2}(k,G)\}$. Then either: (1) there is at least an $(i',j') \in E$ and $r'\in [k]$ such that $\tilde{x}_{i'r'} + \tilde{x}_{j'r'} >1$; or (2) there is at least an $i' \in [n]$ and $r' \in [k]$ such that $\tilde{x}_{i'r'} = 1$ and $\sum_{r\neq r' \in [k]} \tilde{x}_{i'r} \ge1$. 

For case~(1), consider the feasible solution $(x, s^0, t^0) \in \{0,1\}^{n\times k + |E|\times k + n}$ for~\eqref{eq:qubolinear} obtained from $(\tilde{x}, \tilde{s}, \tilde{t})$ by letting $(x,s,t) = (\X_{i'r'}(\tilde{x}), \M^0_{i'j',r'}(\tilde{s}, \tilde{t}))$ (cf.,~\eqref{eq:mapx},~\eqref{eq:mapst}). It then follows 
from~\eqref{eq:h0},~\eqref{eq:mapx}, and the fact that $\tilde{x}_{i'r'}=1$ that 
\begin{equation}
\label{eq:h0diffeq}
H_0(x) =  H_0(\tilde{x}) - 1.
\end{equation}
Also, from~\eqref{eq:h1l},~\eqref{eq:mapx},~\eqref{eq:mapst_s}, and the fact that $\tilde{x}_{i'r'}=\tilde{x}_{j'r'}=1$, it follows that $-H^l_1(x,s^0) = - H^l_1(\tilde{x},\tilde{s}) + \sum_{(i',j\neq j') \in E} 4\tilde{x}_{jr'}\tilde{s}_{i'jr'} + (1+\tilde{s}_{i'j'r'})^2$. Thus,
\begin{equation}
\label{eq:h1ldiff1}
-H^l_1(x,s^0) \ge - H^l_1(\tilde{x},\tilde{s}) + 1.
\end{equation}
Further, from~\eqref{eq:h2l},~\eqref{eq:mapx},~\eqref{eq:mapst_t}, and the fact that $\tilde{x}_{i'r'}=1$, it follows that $-H^l_2(x,t^0) = - H^l_2(\tilde{x},\tilde{t}) + 2\tilde{t}_{i'}\sum_{r \neq r' \in [k]} \tilde{x}_{i'r} + \tilde{t}^2_{i'}$. Thus,
\begin{equation}
\label{eq:h2ldiff1}
-H^l_2(x,t^0) \ge - H^l_2(\tilde{x},\tilde{t}).
\end{equation}
Using~\eqref{eq:h0diffeq},~\eqref{eq:h1ldiff1},~\eqref{eq:h2ldiff1}, it follows that $H^l_{c_1,c_2}(x,s^0,t^0) \ge H^l_{c_1,c_2}(\tilde{x},\tilde{s},\tilde{t}) -1+ c_1 > H^l_{c_1,c_2}(\tilde{x},\tilde{s},\tilde{t}) = Q^l_{c_1,c_2}(k,G)$, which contradicts the optimality of $(\tilde{x},\tilde{s},\tilde{t})$ for~\eqref{eq:qubolinear}. 

We proceed analogously for case (2). Consider the feasible solution $(x, s, t) \in \{0,1\}^{n\times k + |E|\times k + n}$ for~\eqref{eq:qubolinear} obtained from $(\tilde{x}, \tilde{s}, \tilde{t})$ by letting $(x,s^1,t^1) =$ $(\X_{i'r'}(\tilde{x}), \allowbreak \M^1_{i',r'}(\tilde{s}, \tilde{t}))$ (cf.,~\eqref{eq:mapx},~\eqref{eq:mapst}). It then follows 
from~\eqref{eq:h0},~\eqref{eq:mapx}, and the fact that $\tilde{x}_{i'r'}=1$ that~\eqref{eq:h0diffeq} holds. 
Also, from~\eqref{eq:h1l},~\eqref{eq:mapx},~\eqref{eq:mapst_s}, and the fact that $\tilde{x}_{i'r'}=1$, it follows that $-H^l_1(x,s^1) = - H^l_1(\tilde{x},\tilde{s}) + \sum_{(i',j) \in E} 4\tilde{x}_{jr'}\tilde{s}_{i'jr'} $. Thus,
\begin{equation}
\label{eq:h1ldiff2}
-H^l_1(x,s^1) \ge - H^l_1(\tilde{x},\tilde{s}).
\end{equation}
Further, from~\eqref{eq:h2l},~\eqref{eq:mapx},~\eqref{eq:mapst_t}, and the fact that $\tilde{x}_{i'r'}=1$, $\sum_{r \neq r' \in [k]} \tilde{x}_{j'r} \ge 1$, it follows that $-H^l_2(x,t^1) = - H^l_2(\tilde{x},\tilde{t}) + 2(\sum_{r \neq r' \in [k]} \tilde{x}_{i'r})(1+\tilde{t}_{i'}) + \tilde{t}^2_{i'} - 1$. Thus,
\begin{equation}
\label{eq:h2ldiff2}
-H^l_2(x,t^1) \ge - H^l_2(\tilde{x},\tilde{s}) +1.
\end{equation}
Using~\eqref{eq:h0diffeq},~\eqref{eq:h1ldiff2},~\eqref{eq:h2ldiff2}, it follows that $H^l_{c_1,c_2}(x,s^1,t^1) \ge H^l_{c_1,c_2}(\tilde{x},\tilde{s},\tilde{t}) -1+ c_2 > H^l_{c_1,c_2}(\tilde{x},\tilde{s},\tilde{t}) = Q^l_{c_1,c_2}(k,G)$, which contradicts the optimality of $(\tilde{x},\tilde{s},\tilde{t})$ for~\eqref{eq:qubolinear}. 

Therefore $\tilde{x}$ satisfies that there is no $(i',j') \in E$ and $r'\in [k]$ such that $\tilde{x}_{i'r'} + \tilde{x}_{j'r'} >1$, or $i' \in [n]$ and $r' \in [k]$ such that $\sum_{r \in [k]} \tilde{x}_{i'r} >1$. Therefore $\tilde{x}$ is a feasible solution of~\eqref{eq:MkCS}, which finishes the proof.
\end{proof}

It is worth to mention that, loosely speaking, the general form of the QUBO reformulation~\eqref{eq:qubolinear} for the M$k$CS problem can be obtained by using the recent results of~\citet[][Thm. 2.2]{lasserre2016max}. Namely, one can use this result after reformulating the M$k$CS problem constraints as equality constraints using the approach described in~\citep[][Sec. 2.3]{lasserre2016max}. Then, after reformulating the problem using $\{1,-1\}$ binary variables (instead of $\{0,1\}$ binary variables),~\citep[][Thm. 2.2]{lasserre2016max} can be used to obtain a QUBO reformulation of the M$k$CS problem. However, this reformulation would require the use a penalty parameter with a value larger than $nk$ (cf., with the values of $c_1, c_2$ in Theorem~\ref{thm:linearQUBO}), and require the use of more auxiliary (i.e., slack) binary variables than the ones used in Theorem~\ref{thm:linearQUBO}. Thus, Theorem~\ref{thm:linearQUBO} provides an improved QUBO reformulation of the M$k$CS problem than the one that would be obtained using~\citep[][Thm. 2.2]{lasserre2016max}. 

Later, in Section~\ref{sec:linearagain}, we will further characterize the QUBO reformulation~\eqref{eq:qubolinear}
for the M$k$CS. Next, however, we derive and characterize a QUBO reformulation for the M$k$CS in which no auxiliary (i.e., slack) binary variables are needed.

\subsection{Nonlinear QUBO reformulation}
\label{sec:nonlinear}

Next, we obtain an improved QUBO reformulation for the M$k$CS problem in terms of the number of binary decision variables required in the QUBO reformulation, when compared with the one provided  and characterized in Section~\ref{sec:linear}. For this purpose, first notice that for any $x\in\{0,1\}^{n\times k}$, the linear constraints in~\eqref{eq:linearcons} are equivalent to the nonlinear constraints
\begin{align}
\label{eq:nonlinearcons}
\begin{split}
 x_{ir}x_{jr} = 0, & \text{ for all } (i,j) \in E, r\in [k],\\
 x_{ir}x_{ip} = 0, & \text{ for all } i \in [n], (r,p \neq r)\in [k]\times [k].   
\end{split}    
\end{align}
Then, consistent with~\eqref{eq:nonlinearcons}, given $k\geq 1$, a graph $G=(V,E)$ on $n$ vertices, and $x \in \{0,1\}^{n\times k}$, let
\begin{subequations}
\label{eq:h1h2n}
\begin{align}
H_1^n(x) &= \dsum_{(i,j) \in E, r \in [k]} x_{ir}x_{jr}, \label{eq:h1n}\\
H_2^n(x) & = \dsum_{i\in[n]} \left( \sum_{r\in [k],p \neq r\in[k]}x_{ir}x_{ip} \right ) \label{eq:h2n}.
\end{align}
\end{subequations}
With these definitions in hand we can now obtain the desired {\em nonlinear-based} QUBO reformulation of the M$k$CS problem. For any $c_1, c_2 > 0$ define the QUBO problem:
\begin{equation}
\label{eq:qubononlinear}
\begin{array}{llll}
Q^n_{c_1,c_2}(k,G) := & \max & H^n_{c_1,c_2}(x):= H_0(x) - c_1H_1^n(x) - c_2H_2^n(x)\\[2ex]
   & \st & x \in \{0,1\}^{n\times k}.
   \end{array}
\end{equation}

\begin{theorem}[nonlinear-based QUBO reformulation of M$k$CS problem]
\label{thm:nonlinearQUBO}
Let $k\geq 1$ and a graph $G=(V,E)$ on $n$ vertices be given.  Then, 
for any $c_1 >1, c_2 > 1$, $Q^n_{c_1,c_2}(k,G) = \alpha_k(G)$, and 
if $\tilde{x} \in \argmax\{Q^n_{c_1,c_2}(k,G)\}$ then $\tilde{x} \in \argmax\{\alpha_k(G)\}$.
\end{theorem}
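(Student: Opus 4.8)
The plan is to mirror the template of the proof of Theorem~\ref{thm:linearQUBO}, exploiting the fact that here there are no slack variables, so the only relabeling needed is the mapping $\X_{i'r'}$ of~\eqref{eq:mapx} (and the auxiliary mapping $\M^p_{i'j',r'}$ of~\eqref{eq:mapst} is no longer required). First I would establish the ``relaxation'' inequality $Q^n_{c_1,c_2}(k,G) \ge \alpha_k(G)$. For any $x' \in \{0,1\}^{n\times k}$ feasible for~\eqref{eq:MkCS}, the nonlinear constraints~\eqref{eq:nonlinearcons} hold, so every summand of $H_1^n(x')$ in~\eqref{eq:h1n} and of $H_2^n(x')$ in~\eqref{eq:h2n} vanishes; moreover, since $x'$ is binary, $(x'_{ir})^2 = x'_{ir}$, so $H_0(x') = \sum_{i\in[n],r\in[k]} x'_{ir}$ equals the M$k$CS objective. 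Hence $H^n_{c_1,c_2}(x') = \sum_{i\in[n],r\in[k]} x'_{ir}$ on the M$k$CS-feasible set, and maximizing the same function over the larger set $\{0,1\}^{n\times k}$ yields $Q^n_{c_1,c_2}(k,G) \ge \alpha_k(G)$ for all $c_1,c_2>0$.

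The core of the argument is to show that when $c_1,c_2>1$ any maximizer $\tilde{x}$ of~\eqref{eq:qubononlinear} is feasible for~\eqref{eq:MkCS}. Arguing by contradiction, infeasibility of $\tilde{x}$ forces (exactly as in Theorem~\ref{thm:linearQUBO}) one of two cases: either (1) some edge $(i',j')\in E$ and color $r'$ satisfy $\tilde{x}_{i'r'}=\tilde{x}_{j'r'}=1$, or (2) some vertex $i'$ carries two colors $r'\neq p'$ with $\tilde{x}_{i'r'}=\tilde{x}_{i'p'}=1$. In both cases I set $x=\X_{i'r'}(\tilde{x})$, i.e.\ drop color $r'$ from vertex $i'$. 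Since $\tilde{x}_{i'r'}=1$, this always gives $H_0(x)=H_0(\tilde{x})-1$ by~\eqref{eq:h0}. Because $H_1^n$ and $H_2^n$ are sums of products of nonnegative binary variables, zeroing one variable cannot increase them, so $-c_1H_1^n(x)\ge -c_1H_1^n(\tilde{x})$ and $-c_2H_2^n(x)\ge -c_2H_2^n(\tilde{x})$ hold unconditionally; the point is to extract a strictly positive gain from the relevant penalty in each case.

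The key bookkeeping step quantifies that gain. Zeroing $x_{i'r'}$ removes from $H_1^n$ exactly the terms $\tilde{x}_{i'r'}\tilde{x}_{jr'}$ over neighbours $j$ of $i'$, so $H_1^n(\tilde{x})-H_1^n(x)=\sum_{j:(i',j)\in E}\tilde{x}_{jr'}$; in case (1) the neighbour $j'$ contributes $1$, giving $-c_1H_1^n(x)\ge -c_1H_1^n(\tilde{x})+c_1$. Likewise, zeroing $x_{i'r'}$ removes from $H_2^n$ all terms pairing color $r'$ with another color at vertex $i'$, i.e.\ $H_2^n(\tilde{x})-H_2^n(x)=\sum_{p\neq r'}\tilde{x}_{i'p}$ up to the multiplicity with which each color pair is counted in~\eqref{eq:h2n}; in case (2) the color $p'$ is present, so this difference is at least $1$, giving $-c_2H_2^n(x)\ge -c_2H_2^n(\tilde{x})+c_2$. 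Combining the three estimates yields $H^n_{c_1,c_2}(x)\ge H^n_{c_1,c_2}(\tilde{x})-1+c_1$ in case (1) and $H^n_{c_1,c_2}(x)\ge H^n_{c_1,c_2}(\tilde{x})-1+c_2$ in case (2). For $c_1>1$ (resp.\ $c_2>1$) each is strictly larger than $H^n_{c_1,c_2}(\tilde{x})=Q^n_{c_1,c_2}(k,G)$, contradicting optimality. Hence $\tilde{x}$ is M$k$CS-feasible, so $Q^n_{c_1,c_2}(k,G)=H^n_{c_1,c_2}(\tilde{x})=\sum_{i\in[n],r\in[k]}\tilde{x}_{ir}\le \alpha_k(G)$; together with the relaxation bound this gives equality, and since $\tilde{x}$ is feasible and attains $\alpha_k(G)$, we conclude $\tilde{x}\in\argmax\{\alpha_k(G)\}$.

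I do not anticipate a genuine obstacle: the whole difficulty reduces to the routine but sign-sensitive accounting of how $H_0$, $H_1^n$, and $H_2^n$ change under $\X_{i'r'}$, and in particular to correctly reading the multiplicity of each color pair in the double sum defining $H_2^n$ in~\eqref{eq:h2n}. This is strictly simpler than the proof of Theorem~\ref{thm:linearQUBO}, since the absence of slack variables removes the need to track the mapping $\M^p_{i'j',r'}$ and the associated differences for the squared-penalty terms. The one point to state carefully is that in each case the ``other'' penalty contributes a nonnegative amount to the gain while the relevant penalty supplies the strict positive increment, so that the single deletion $\X_{i'r'}$ suffices to reach the contradiction.
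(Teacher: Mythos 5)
Your proposal is correct and follows essentially the same route as the paper's proof: the same relaxation inequality, the same contradiction argument via the single deletion $\X_{i'r'}$, and the same case-by-case extraction of a strict gain of $c_1-1$ or $c_2-1$ from the relevant penalty term. The only cosmetic difference is that you first invoke monotonicity of $H_1^n$ and $H_2^n$ under zeroing a variable and then isolate the strictly positive increment, whereas the paper writes out the exact differences before specializing to each case.
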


\begin{proof}
The proof is mostly analogous to the proof of Theorem~\ref{thm:linearQUBO}.
First, notice that~$\tilde{x}$ is well defined and $Q^n_{c_1,c_2}(k,G)$ is attained as~\eqref{eq:qubononlinear} is defined over a compact feasible set. Also, notice that for any $c_1, c_2 > 0$ and any feasible solution $x' \in \{0,1\}^{n\times k}$ for the M$k$CS problem~\eqref{eq:MkCS} with objective value $z(x'):= \sum_{i\in [n], r\in [k]} x_{ir}$, one hast that $x'$ is a feasible solution of~\eqref{eq:qubononlinear}   with objective value $H^n_{c_1,c_2}(x) = z(x')$ (i.e., $x'$ satisfies~\eqref{eq:nonlinearcons}). Thus, if $c_1, c_2 > 0$, the QUBO problem~\eqref{eq:qubononlinear} is a  relaxation of~\eqref{eq:MkCS}, and consequently  $Q^n_{c_1,c_2}(k,G) \ge \alpha_k(G)$. Thus, 
to prove the result,
it is enough to show that when $c_1, c_2 > 1$, one has that~$\tilde{x}$ is a feasible solution for~\eqref{eq:MkCS}. By contradiction, assume this is not the case and let $c_1, c_2 > 1$. Then either: (1) there is at least an $(i',j') \in E$ and $r'\in [k]$ such that $\tilde{x}_{i'r'} + \tilde{x}_{j'r'} >1$; or (2) there is at least an $i' \in [n]$ and $r' \in [k]$ such that $\tilde{x}_{i'r'} = 1$ and $\sum_{r\neq r' \in [k]} \tilde{x}_{i'r} \ge1$. Notice that in either case $\tilde{x}_{i'r'} = 1$. Now consider 
the feasible solution $x \in \{0,1\}^{n\times k }$ for~\eqref{eq:qubononlinear} obtained from $\tilde{x}$ by letting $x = \X_{i'r'}(\tilde{x})$ (cf.,~\eqref{eq:mapx}). Notice that from~\eqref{eq:h0},~\eqref{eq:mapx}, and the fact that $\tilde{x}_{i'r'}=1$ one has that~\eqref{eq:h0diffeq} holds.
Also, from~\eqref{eq:h1n},~\eqref{eq:mapx}, and the fact that $\tilde{x}_{i'r'}=1$, it follows that $-H^n_1(x) = - H^n_1(\tilde{x}) + \sum_{(i',j\neq j') \in E} \tilde{x}_{jr'}+\tilde{x}_{j'r'}$. Thus,
\begin{equation}
\label{eq:h1ndiff}
-H^n_1(x) \ge - H^n_1(\tilde{x}) +\tilde{x}_{j'r'}.
\end{equation}
Further, from~\eqref{eq:h2n},~\eqref{eq:mapx}, and the fact that $\tilde{x}_{i'r'}=1$, it follows that 
\begin{equation}
\label{eq:h2ndiffeq}
-H^n_2(x) = - H^n_2(\tilde{x}) + \sum_{r \neq r' \in [k]}x_{i'r}. 
\end{equation}
Using~\eqref{eq:h0diffeq},~\eqref{eq:h1ndiff},~\eqref{eq:h2ndiffeq}, it follows that 
\begin{equation}
\label{eq:hndiffeq}
H^n_{c_1,c_2}(x) \ge H^n_{c_1,c_2}(\tilde{x}) -1+ c_1\tilde{x}_{j'r'} + c_2\sum_{r \neq r' \in [k]}x_{i'r}.\end{equation}
In case (1), we have that $\tilde{x}_{j'r'} = 1$. Thus, from~\eqref{eq:hndiffeq}, we have that $H^n_{c_1,c_2}(x) \ge H^n_{c_1,c_2}(\tilde{x}) -1+ c_1>H^n_{c_1,c_2}(\tilde{x}) = Q^n_{c_1,c_2}(k,G)$, which contradicts the optimality of $\tilde{x}$ for~\eqref{eq:qubononlinear}. 
Analogously, in case (2), we have that $\sum_{r \neq r' \in [k]} \tilde{x}_{j'r} \ge 1$. Thus, from~\eqref{eq:hndiffeq}, we have that
 $H^n_{c_1,c_2}(x) \ge H^n_{c_1,c_2}(\tilde{x}) -1+ c_2 > H^n_{c_1,c_2}(\tilde{x}) = Q^n_{c_1,c_2}(k,G)$, which contradicts the optimality of $\tilde{x}$ for~\eqref{eq:qubononlinear}. 

Therefore $\tilde{x}$ satisfies that there is no $(i',j') \in E$ and $r'\in [k]$ such that $\tilde{x}_{i'r'} + \tilde{x}_{j'r'} >1$, or $i' \in [n]$ and $r' \in [k]$ such that $\sum_{r \in [k]} \tilde{x}_{i'r} >1$. Therefore $\tilde{x}$ is a feasible solution of~\eqref{eq:MkCS}, which finishes the proof.
\end{proof}

Next, we show that the value of the penalty parameters $c_1, c_2$ in the definition of $Q^n_{c_1,c_2}(k,G)$ in~\eqref{eq:qubononlinear} can be further reduced to the values $c_1=c_2=1$ (indeed, more generally to $c_1 = 1$, $c_2 \ge 1$, or $c_1 \ge 1$, $c_2=1$), while still being able to obtain an optimal solution for the M$k$CS problem for $G(V,E)$ by solving the QUBO problem $Q^n_{1,1}(k,G)$. In this case, $Q^n_{1,1}(k,G)$ and $\alpha_k(G)$ are equivalent in terms of their optimal objective value, but not necessarily in terms of their optimal solutions. That is, the optimal solution~$\tilde{x}:= \argmax(Q^n_{1,1}(k,G))$ might not necessarily be a feasible solution for the M$k$CS problem, which hinders the possibility of constructing a M$k$CS set $H$ for $G(V,E)$. However, as we formally show in the next corollary, the $Q^n_{1,1}(k,G)$ optimal solution $\tilde{x}$
can be simply modified to obtain an optimal solution for the M$k$CS problem.


\begin{corollary}[unit-penalty nonlinear-based QUBO formulation of M$k$CS problem]
\label{cor:unitnonlinear}
Let $k\geq 1$ and a graph $G=(V,E)$ on $n$ vertices and $c_1, c_2 \ge 0$ be given, and let $\tilde{x} := \argmax\{Q^n_{c_1,c_2}(k,G)\}$ (recall~\eqref{eq:qubononlinear}). If $c_1 = 1$, $c_2 \ge 1$ or $c_1 \ge 1$, $c_2 = 1$, then
$Q^n_{c_1,c_2}(k,G) = \alpha_k(G)$. Furthermore,
$x' \in \argmax\{\alpha_k(G)\}$,
 where $x' \in \{0,1\}^{n \times k}$ is the output obtained when $k$, $G(V,E)$, $|V|$, $x = \tilde{x}$ is used as input in Algorithm~\ref{alg:AlgorithmA}.
\end{corollary}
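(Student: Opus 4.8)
The plan is to feed $\tilde{x}$ into Algorithm~\ref{alg:AlgorithmA} and track the value of the objective $H^n_{c_1,c_2}$ through every single color-drop the algorithm performs, showing that under the stated penalties \emph{no} drop ever decreases this value. Since the output $x'$ is, by construction, feasible for~\eqref{eq:MkCS}, a short squeezing argument then yields simultaneously the objective identity $Q^n_{c_1,c_2}(k,G)=\alpha_k(G)$ and the optimality $x'\in\argmax\{\alpha_k(G)\}$. Before that, I would record two facts already available from the proof of Theorem~\ref{thm:nonlinearQUBO}: for any $c_1,c_2>0$ one has $Q^n_{c_1,c_2}(k,G)\ge\alpha_k(G)$; and any $x'$ feasible for~\eqref{eq:MkCS} satisfies~\eqref{eq:nonlinearcons}, so both penalties vanish and $H^n_{c_1,c_2}(x')=z(x')\le\alpha_k(G)$.

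The heart of the argument is a per-step monotonicity claim. If $x\in\{0,1\}^{n\times k}$ is the current iterate and a single entry $x_{i'r'}=1$ is set to $0$ (the map $\X_{i'r'}$ of~\eqref{eq:mapx}), then the increments computed in the proof of Theorem~\ref{thm:nonlinearQUBO} (cf.~\eqref{eq:h0diffeq},~\eqref{eq:h1ndiff},~\eqref{eq:h2ndiffeq}) combine, as in~\eqref{eq:hndiffeq}, into
\begin{equation*}
H^n_{c_1,c_2}(\X_{i'r'}(x)) = H^n_{c_1,c_2}(x) - 1 + c_1\sum_{(i',j)\in E} x_{jr'} + c_2\sum_{r\neq r'} x_{i'r},
\end{equation*}
where both sums are nonnegative. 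I would then split according to which loop of Algorithm~\ref{alg:AlgorithmA} triggers the drop. A drop in Step~\ref{step:4} occurs only when the current iterate has $x_{i'r'}=x_{j'r'}=1$ for some edge $(i',j')\in E$, whence $\sum_{(i',j)\in E} x_{jr'}\ge 1$ and, using $c_1\ge 1$, the increment is at least $-1+c_1\ge 0$. A drop in Step~\ref{step:9} occurs only when $x_{i'r'}=1$ and $\sum_{r\neq r'} x_{i'r}\ge 1$, whence, using $c_2\ge 1$, the increment is at least $-1+c_2\ge 0$. Because both alternatives in the hypothesis force $c_1\ge 1$ \emph{and} $c_2\ge 1$, every drop in either loop is non-decreasing; telescoping over all drops gives $H^n_{c_1,c_2}(x')\ge H^n_{c_1,c_2}(\tilde{x})=Q^n_{c_1,c_2}(k,G)$.

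Finally I would close the loop. Since $x'$ is the guaranteed feasible output of Algorithm~\ref{alg:AlgorithmA}, the two recorded facts chain into
\begin{equation*}
\alpha_k(G)\le Q^n_{c_1,c_2}(k,G)=H^n_{c_1,c_2}(\tilde{x})\le H^n_{c_1,c_2}(x')=z(x')\le\alpha_k(G),
\end{equation*}
forcing equality throughout. Hence $Q^n_{c_1,c_2}(k,G)=\alpha_k(G)$ and $z(x')=\alpha_k(G)$, i.e.\ $x'\in\argmax\{\alpha_k(G)\}$, as claimed.

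The routine ingredients (the relaxation bound and the vanishing of the penalties at feasible points) are immediate; the real content is the step-by-step bookkeeping. The point requiring care is that the monotonicity must be verified against the \emph{current} iterate at each iteration rather than against $\tilde{x}$, and that the two trigger conditions of Algorithm~\ref{alg:AlgorithmA} line up precisely with the two penalty terms (edge violations with $c_1$, multi-color violations with $c_2$), so that each drop is exactly ``paid for'' by the penalty it removes. This is also what makes the boundary values $c_1=1$ or $c_2=1$ sufficient: a drop loses one unit from $H_0$ but recovers at least $c_1\ge 1$ (resp.\ $c_2\ge 1$) from the corresponding penalty, so optimality of $\tilde{x}$ is merely \emph{preserved} under the drop rather than, as in Theorem~\ref{thm:nonlinearQUBO}, strictly violated.
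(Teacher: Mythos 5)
Your proposal is correct and follows essentially the same route as the paper's proof: apply Algorithm~\ref{alg:AlgorithmA} to $\tilde{x}$, use the increment identity behind~\eqref{eq:hndiffeq} (evaluated at the current iterate) to show each color-drop in step~\ref{step:4} or step~\ref{step:9} is non-decreasing when $c_1\ge 1$ and $c_2\ge 1$, and then squeeze between $Q^n_{c_1,c_2}(k,G)\ge\alpha_k(G)$ and feasibility of $x'$. If anything, your uniform treatment of both loops under both hypothesis cases is slightly more explicit than the paper's, which pairs step~\ref{step:4} with the case $c_1=1$ and step~\ref{step:9} with the case $c_2=1$.
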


\begin{proof}
The result follows from the proof of Theorem~\ref{thm:nonlinearQUBO} and Algorithm~\ref{alg:AlgorithmA}. More specifically, in the case $c_1 =1$, $c_2 \ge 1$, notice that Algorithm~\ref{alg:AlgorithmA}, step~\eqref{step:4}, is equivalent to applying the mapping $\X_{ir}(\cdot)$ (recall~\eqref{eq:mapx}) to the current solution $x$ in the Algorithm when $x_{ir} =x_{jr}=1$ for some $(i,j) \in E$, $r\in [k]$. Thus, it follows from~\eqref{eq:hndiffeq} that the value of $H^n_{c_1,c_2}=H^n_{1,c_2}(x)$ can only increase or stay equal 
after Algorithm~\ref{alg:AlgorithmA}, step~\eqref{step:4}. Similarly, in the case $c_1 \ge 1$, $c_2 = 1$, notice that Algorithm~\ref{alg:AlgorithmA}, step~\eqref{step:9}, is equivalent to applying the mapping $\X_{ir}(\cdot)$ (recall~\eqref{eq:mapx}) to the current solution $x$ in the Algorithm when $x_{ir} =1$, $\sum_{p \neq r \in[k]} x_{ip} \ge 1$ for some $i \in [n]$, $r \in [k]$. Thus, it follows from~\eqref{eq:hndiffeq} that the value $H^n_{c_1,c_2}(x) = H^n_{c_1,1}(x)$ can only increase of stay equal after Algorithm~\ref{alg:AlgorithmA}, step~\eqref{step:9}. Thus, in both cases, at the end of Algorithm~\ref{alg:AlgorithmA} one obtains a feasible solution $x'$ for the M$k$CS problem with objective $H^n_{c_1,c_2}(x') \ge H^n_{c_1,c_2}(\tilde{x}) = Q^n_{c_1,c_2}(k,G)$. Since $Q^n_{c_1,c_2}(k,G) \ge \alpha_k(G)$ (see beginning of proof of Theorem~\ref{thm:nonlinearQUBO}), it follows that $Q^n_{c_1,c_2}(k,G) = \alpha_k(G)$, and $x' \in \argmax\{\alpha_k(G))\}$.
\end{proof}

In light of Theorem~\ref{thm:nonlinearQUBO} and Corollary~\ref{cor:unitnonlinear}, it is natural to consider what happens if in the QUBO problem~\eqref{eq:qubononlinear} one considers penalty parameters $0< c_1, c_2 < 1$. 

\begin{proposition}
\label{prop:counter}
Let $k \ge 1$ and $c_1, c_2 > 0$ be given. If $c_1 < 1$ or $c_2 < 1$ and $k \ge 1$, then  
there exists a graph $G(V, E)$ such that $Q^n_{c_1,c_2}(k,G) > \alpha_k(G)$.
\end{proposition}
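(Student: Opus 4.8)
The plan is to use the fact, established at the beginning of the proof of Theorem~\ref{thm:nonlinearQUBO}, that $Q^n_{c_1,c_2}(k,G) \ge \alpha_k(G)$ for all $c_1,c_2 > 0$. Consequently it suffices to exhibit, for each deficient parameter regime, a single graph $G$ together with one binary point $x \in \{0,1\}^{n\times k}$ whose objective $H^n_{c_1,c_2}(x)$ strictly exceeds $\alpha_k(G)$; this immediately forces $Q^n_{c_1,c_2}(k,G) \ge H^n_{c_1,c_2}(x) > \alpha_k(G)$. I would treat the two deficient regimes, $c_1 < 1$ and $c_2 < 1$, with two different small ``gadget'' graphs, each chosen so that a single unit of constraint violation buys a full unit of objective in $H_0$ while costing strictly less than one unit of penalty.

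For the case $c_1 < 1$, I would take $G = K_{k+1}$, the complete graph on $k+1$ vertices, for which $\alpha_k(K_{k+1}) = k$ (in a clique all colored vertices must receive distinct colors, and only $k$ colors are available). Consider the infeasible assignment $x$ that colors each of the $k+1$ vertices with a single color, necessarily reusing one color on two adjacent vertices by the pigeonhole principle. Then $H_0(x) = k+1$, $H_2^n(x) = 0$ (each vertex carries one color), and $H_1^n(x) = 1$ (exactly one monochromatic edge). Hence $H^n_{c_1,c_2}(x) = (k+1) - c_1 > k = \alpha_k(K_{k+1})$ whenever $c_1 < 1$, which settles this case for every $k \ge 1$.

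For the case $c_2 < 1$ (with $k \ge 2$, the only regime in which the second penalty is active, since $H_2^n \equiv 0$ when $k=1$), I would take $G$ to be a single isolated vertex, so that $\alpha_k(G) = 1$. Consider the infeasible assignment $x$ that colors this vertex with exactly two colors. Then $H_0(x) = 2$, $H_1^n(x) = 0$ (there are no edges), and $H_2^n(x) = 1$ (a single pair of colors on the vertex). Hence $H^n_{c_1,c_2}(x) = 2 - c_2 > 1 = \alpha_k(G)$ whenever $c_2 < 1$, settling this case.

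The delicate point, and the step I would be most careful about, is the penalty bookkeeping: the argument works precisely because each gadget incurs exactly one unit of penalty. In $K_{k+1}$ the over-coloring is arranged so that exactly one edge is monochromatic (a denser violation would pay a multiple of $c_1$ and could fail for $c_1$ near $1$); likewise the single vertex is given exactly two colors, contributing exactly $\binom{2}{2}=1$ to $H_2^n$, consistently with the per-color accounting used in~\eqref{eq:h2ndiffeq}, where removing one color from a vertex changes $H_2^n$ by the number of remaining colors (so a vertex bearing $m$ colors contributes $\binom{m}{2}$). This matches the unit gain in $H_0$ against a penalty of $c_1$ (resp.\ $c_2$), yielding the sharp threshold at $1$. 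Finally, I would note that for $k=1$ the penalty $H_2^n$ vanishes identically, so $c_2$ is immaterial and the degenerate case is governed solely by the $c_1 < 1$ construction.
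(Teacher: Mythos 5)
Your proof is correct, and it is worth comparing to the paper's. For the $c_1<1$ case you use exactly the paper's construction: $K_{k+1}$ with colors $1,\dots,k$ on the first $k$ vertices and one color reused on vertex $k+1$, giving $H_0=k+1$, $H_1^n=1$, $H_2^n=0$ and hence $(k+1)-c_1>k=\alpha_k(K_{k+1})$. For the $c_2<1$ case you diverge: the paper attaches a pendant vertex $k+2$ to a $(k+1)$-clique and gives that pendant two colors ($k$ and $k-1$), obtaining $(k+2)-c_2>k+1$, whereas you use a single isolated vertex carrying two colors, obtaining $2-c_2>1$. Your gadget is strictly simpler and works equally well; both rely on the same reading of $H_2^n$ as counting unordered color pairs (so a vertex with two colors contributes exactly $1$), which you correctly justify from~\eqref{eq:h2ndiffeq}. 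One genuine point in your favor: you observe that $H_2^n\equiv 0$ when $k=1$, so the $c_2<1$ construction requires $k\ge 2$. The paper's proof has the same limitation (its assignment uses color $k-1$, which does not exist when $k=1$), and indeed for $k=1$, $c_1\ge 1$, $c_2<1$ no such graph can exist since $c_2$ never enters the objective; so the proposition as literally stated is slightly too strong in that degenerate corner, and your explicit caveat is the more careful reading.
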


\begin{proof}
First, consider the case in which $0 < c_1 < 1$, and let $G(V, E)$ is a clique of $k+1$ vertices. Clearly $\alpha_k(G) = k$. Now, for all $i \in [k+1]$, $r \in [k]$, let 
\[
x_{ir} = \left \{ \begin{array}{ll}
                             1 & i=r, i \le k,\\
                             1 & i=k+1, r=k, \\
                             0 & \text{otherwise.}\\
                             \end{array} \right .
                             \]
Then, $Q^n_{c_1,c_2}(k,G) \ge H^n_{c_1,c_2}(x) = (k+1) - c_1 > k = \alpha_k(G)$. 
Now, consider the case in which $0 < c_2 < 1$, and let 
$G(V, E)$ be the graph on $k+1$ vertices obtained by taking a clique in $k+1$ vertices and adding a vertex $k+2$ and edge $(k+1,k+2)$. That is, $V = [k+2]$, and 
$E = \{(i,j): 1 \le i < j \le k+1\} \cup \{(k+1,k+2)\}$. Clearly $\alpha_k(G) = k+1$. Now let 
\[
x_{ir} = \left \{ \begin{array}{ll}
                             1 & i=r, i \le k\\
                             1 & i=k+2, r=k \\
                             1 & i=k+2, r=k-1 \\
                             0 & \text{otherwise}\\
                             \end{array} \right . , \text{ for all } i \in [k+2], r \in [k].
                             \]
Then, $Q^n_{c_1,c_2}(k,G) \ge H^n_{c_1,c_2}(x) = (k+2) - c_2 > k+1 = \alpha_k(G)$.                         
\end{proof}

\begin{remark}
\label{rem:summary}
Theorem~\ref{thm:nonlinearQUBO} together with Corollary~\ref{cor:unitnonlinear} and Proposition~\ref{prop:counter} fully characterize the QUBO problem~\eqref{eq:qubononlinear} as a means to obtain a QUBO reformulation of the M$k$CS problem. In short, for any $c_1, c_2 \ge 1$, solving the nonlinear-based QUBO problem~\eqref{eq:qubononlinear} is equivalent to solving the M$k$CS problem with the caveat that if either $c_1 = 1$ or $c_2 = 1$, the simple Algorithm~\ref{alg:AlgorithmA} might need to be applied to the optimal solution of~\eqref{eq:qubononlinear} in order to obtain an optimal solution for the M$k$CS problem.  On the other hand, if $0< c_1<1$ or $0<c_2<1$, solving the nonlinear-based QUBO problem~\eqref{eq:qubononlinear}  is not guaranteed to provide
the objective value or the solution to the M$k$CS problem.
\end{remark}

As illustrated in Section~\ref{sec:benchmark}, the full characterization provided in this section (see summary in Remark~\ref{rem:summary}) gives the freedom to fine tune the QUBO reformulation of the M$k$CS problem to make the best use of quantum tools in addressing the solution of this problem.

In finishing this section, recall that the M$k$CS problem is equivalent to the stable set problem when $k=1$. Thus the QUBO reformulation results~\citep[see, e.g.,][]{calude2017, pardalos1994maximum, nannicini2019performance, wocjan20032, boros2007local, harant2000some, abello2001finding, pajouh2013characterization} for the stable set problem of the form 
\begin{equation}
\label{eq:alphaQUBO}
\alpha(G)= \max \left \{\dsum_{i=1}^{n}x_i^2 - c_1\dsum_{(i,j) \in E} x_ix_j: x \in \{0,1\}^{n} \right \},
\end{equation}
for a given graph $G(V, E)$ on $n$ vertices and $c_1 \ge 1$ follow from Theorem~\ref{thm:nonlinearQUBO} and Corollary~\ref{cor:unitnonlinear}. In particular, Corollary~\ref{cor:unitnonlinear} implies results in which $c_1$ is set to one in~\eqref{eq:alphaQUBO}. However, Corollary~\ref{cor:unitnonlinear} brings up a fact that, to the best of our knowledge, has been ignored in the literature; namely, that when $c_1$ is set to one in~\eqref{eq:alphaQUBO}, the support of the optimal solution of~\eqref{eq:alphaQUBO} might not necessarily correspond to a stable set of the graph $G(V, E)$. However, an optimal solution for the stable set problem can be obtained from the optimal solution of~\eqref{eq:alphaQUBO} by applying Algorithm~\ref{alg:AlgorithmA} (see Corollary~\ref{cor:unitnonlinear}).

\subsection{Linear-based QUBO reformulation revisited}
\label{sec:linearagain}

After the results in Section~\ref{sec:nonlinear}, which provide a full characterization of the QUBO problem~\eqref{eq:qubononlinear} to reformulate the M$k$CS problem, it is natural to consider if a similar full characterization of the QUBO problem~\eqref{eq:qubolinear} can be obtained. Indeed, it is not difficult to see that analogous results (with analogous proofs that are not included in the interest of brevity), to Corollary~\ref{cor:unitnonlinear} and Proposition~\ref{prop:counter} can be obtained for the QUBO problem~\eqref{eq:qubolinear}.

\begin{corollary}[unit-penalty linear-based QUBO formulation of M$k$CS problem]
\label{cor:unitlinear}
Let $k\geq 1$ and a graph $G=(V,E)$ on $n$ vertices and $c_1, c_2 \ge 0$ be given, and let $\tilde{x} := \argmax_{x}\{Q^l_{c_1,c_2}(k,G)\}$ (recall~\eqref{eq:qubolinear}). If $c_1 = 1$, $c_2 \ge 1$ or $c_1 \ge 1$, $c_2 = 1$, then
$Q^l_{c_1,c_2}(k,G) = \alpha_k(G)$. Furthermore,
$x' \in \argmax\{\alpha_k(G)\}$,
 where $x' \in \{0,1\}^{n \times k}$ is the output obtained when $k$, $G(V,E)$, $|V|$, $x = \tilde{x}$ is used as input in Algorithm~\ref{alg:AlgorithmA}.
\end{corollary}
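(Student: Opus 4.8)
The plan is to mirror the proof of Corollary~\ref{cor:unitnonlinear}, but to carefully handle the slack variables $s,t$ that are absent in the nonlinear setting. First I would fix an optimal triple $(\tilde{x},\tilde{s},\tilde{t})$ attaining $Q^l_{c_1,c_2}(k,G)$, where $\tilde{x}=\argmax_x\{Q^l_{c_1,c_2}(k,G)\}$ and $(\tilde{s},\tilde{t})$ are the matching optimal slacks. As established at the start of the proof of Theorem~\ref{thm:linearQUBO}, for any $c_1,c_2>0$ the problem~\eqref{eq:qubolinear} is a relaxation of~\eqref{eq:MkCS}, so $Q^l_{c_1,c_2}(k,G)\ge\alpha_k(G)$; this inequality will anchor the final sandwich.

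Next I would run Algorithm~\ref{alg:AlgorithmA} on $\tilde{x}$, tracking a companion update of the slacks alongside each $x$-modification. Each execution of step~\eqref{step:4} flips some $x_{i'r'}$ from $1$ to $0$ at an edge $(i',j')$ and color $r'$ with $x_{i'r'}=x_{j'r'}=1$; this is exactly the map $\X_{i'r'}$, and I would pair it with the slack map $\M^0_{i'j',r'}$ of Case~(1) of Theorem~\ref{thm:linearQUBO}. Likewise, each execution of step~\eqref{step:9} flips $x_{i'r'}$ from $1$ to $0$ when $\sum_{p\ne r'}x_{i'p}\ge1$; this is $\X_{i'r'}$ paired with $\M^1_{i',r'}$ of Case~(2). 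The per-step inequalities derived there, namely~\eqref{eq:h0diffeq}, \eqref{eq:h1ldiff1}, \eqref{eq:h2ldiff1} for step~\eqref{step:4} and~\eqref{eq:h0diffeq}, \eqref{eq:h1ldiff2}, \eqref{eq:h2ldiff2} for step~\eqref{step:9}, yield a net change in $H^l_{c_1,c_2}$ of $-1+c_1$ for an edge-step and $-1+c_2$ for a color-step. Under either hypothesis $c_1=1,c_2\ge1$ or $c_1\ge1,c_2=1$, both quantities are $\ge0$, so the tracked objective $H^l_{c_1,c_2}$ is nondecreasing along the entire run.

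I would then close with a sandwich. At termination the $x$-component equals the algorithm's output $x'$, which is M$k$CS-feasible, and the tracked triple satisfies $H^l_{c_1,c_2}(x',s^{\mathrm{end}},t^{\mathrm{end}})\ge H^l_{c_1,c_2}(\tilde{x},\tilde{s},\tilde{t})=Q^l_{c_1,c_2}(k,G)$. Since $x'$ is feasible, choosing the compensating slacks $s'_{ijr}=1-x'_{ir}-x'_{jr}$ and $t'_i=1-\sum_{r\in[k]}x'_{ir}$ makes $H_1^l=H_2^l=0$, whence $z(x')=H_0(x')=\max_{s,t}H^l_{c_1,c_2}(x',s,t)\ge Q^l_{c_1,c_2}(k,G)\ge\alpha_k(G)$. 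Because $x'$ is feasible for~\eqref{eq:MkCS} we also have $z(x')\le\alpha_k(G)$, forcing equality throughout; hence $Q^l_{c_1,c_2}(k,G)=\alpha_k(G)$ and $x'\in\argmax\{\alpha_k(G)\}$.

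The main obstacle I anticipate is the slack bookkeeping: unlike the nonlinear proof, here the objective depends on $(s,t)$ while Algorithm~\ref{alg:AlgorithmA} only touches $x$, so I must verify that the shadow maps $\M^0,\M^1$ can be applied iteratively and that the local inequalities of Theorem~\ref{thm:linearQUBO} remain valid at each intermediate state. Their derivation uses only that $x_{i'r'}=1$ together with the specific conflict tested by the corresponding \texttt{if}, both of which hold by construction at every step, so the inequalities do carry over. Everything else reduces to the relaxation-plus-sandwich skeleton already used for Corollary~\ref{cor:unitnonlinear}.
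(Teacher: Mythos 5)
Your proposal is correct and follows exactly the route the paper intends: the paper omits the proof of Corollary~\ref{cor:unitlinear}, stating only that it is analogous to Corollary~\ref{cor:unitnonlinear}, and your argument is precisely that analogue, with the additional (and valid) observation that the per-step inequalities \eqref{eq:h0diffeq}, \eqref{eq:h1ldiff1}--\eqref{eq:h2ldiff2} from the proof of Theorem~\ref{thm:linearQUBO} depend only on the local conditions $x_{i'r'}=1$ and the tested conflict, so the shadow maps $\M^0,\M^1$ compose correctly across iterations of Algorithm~\ref{alg:AlgorithmA}. The closing sandwich via $\max_{s,t}H^l_{c_1,c_2}(x',s,t)=H_0(x')=z(x')\le\alpha_k(G)\le Q^l_{c_1,c_2}(k,G)$ is also the intended one.
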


\begin{proposition}
\label{prop:counterlinear}
Let $k \ge 1$ and $c_1, c_2 > 0$ be given. If $c_1 < 1$ or $c_2 < 1$ and $k \ge 1$, then  
there exists a graph $G(V, E)$ such that $Q^l_{c_1,c_2}(k,G) > \alpha_k(G)$.
\end{proposition}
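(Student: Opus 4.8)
The plan is to mirror the construction in Proposition~\ref{prop:counter}, reusing the same two graph families and the same binary assignments~$x$, and to supplement them with slack variables $s$ and $t$ chosen so that the linear penalties $H_1^l$ and $H_2^l$ of~\eqref{eq:h1l}--\eqref{eq:h2l} reduce to exactly the number of constraints violated by~$x$. The essential observation is that, for a fixed binary $x$, the penalty-minimizing binary slack can be read off term by term: for an edge $(i,j)$ and color $r$ one has $\min_{s_{ijr}\in\{0,1\}}(x_{ir}+x_{jr}+s_{ijr}-1)^2 = 0$ whenever $x_{ir}+x_{jr}\in\{0,1\}$ and $=1$ when $x_{ir}+x_{jr}=2$; likewise $\min_{t_i\in\{0,1\}}(\sum_{r}x_{ir}+t_i-1)^2 = 0$ when $\sum_{r}x_{ir}\in\{0,1\}$ and $=1$ when $\sum_{r}x_{ir}=2$. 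Thus each satisfied linear constraint can be made to contribute zero by a suitable slack, while each of the (few) violated constraints in the two counterexamples contributes exactly~$1$. Since $Q^l_{c_1,c_2}(k,G)$ is a maximum, it suffices to exhibit one such feasible triple $(x,s,t)$ whose objective in~\eqref{eq:qubolinear} exceeds $\alpha_k(G)$.

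First I would treat the case $0<c_1<1$. Take $G$ to be the clique on $k+1$ vertices and let $x$ be the assignment of Proposition~\ref{prop:counter} (vertices $1,\dots,k$ colored $1,\dots,k$ and vertex $k+1$ colored $k$), so that $H_0(x)=k+1$ by~\eqref{eq:h0}. The only monochromatic edge is $(k,k+1)$ in color $k$, so choosing $s$ to zero out every other edge-color term yields $H_1^l(x,s)=1$; since every vertex carries exactly one color, $t_i=0$ for all $i$ gives $H_2^l(x,t)=0$. Evaluating the objective at this feasible point gives $H^l_{c_1,c_2}(x,s,t)=(k+1)-c_1$, whence $Q^l_{c_1,c_2}(k,G)\ge (k+1)-c_1>k=\alpha_k(G)$.

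Next I would treat the case $0<c_2<1$ with the same graph and assignment as in Proposition~\ref{prop:counter}: the clique on $k+1$ vertices together with a pendant vertex $k+2$ adjacent only to $k+1$, with $x$ coloring vertices $1,\dots,k$ by $1,\dots,k$ and assigning \emph{both} colors $k$ and $k-1$ to vertex $k+2$, so $H_0(x)=k+2$. Here no edge is monochromatic (vertex $k+1$ is uncolored and vertex $k+2$ shares no color with its neighbor), so the slack that zeroes each edge-color term gives $H_1^l(x,s)=0$; the only vertex violation is $\sum_{r}x_{k+2,r}=2$, which forces the minimizing term $(2+t_{k+2}-1)^2=1$ and hence $H_2^l(x,t)=1$. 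Consequently $H^l_{c_1,c_2}(x,s,t)=(k+2)-c_2$ and $Q^l_{c_1,c_2}(k,G)\ge(k+2)-c_2>k+1=\alpha_k(G)$.

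The routine part is the arithmetic above; the only point requiring care is the slack-minimization identity, namely that the penalty-minimizing binary slack makes a satisfied linear constraint contribute $0$ and a violated one contribute exactly $1$ on these instances. This follows at once by enumerating the binary values, and it is precisely the mechanism by which the minimized linear penalties $H_1^l,H_2^l$ coincide with the nonlinear penalties $H_1^n,H_2^n$ on the assignments used in Proposition~\ref{prop:counter} (each violation there involving a sum equal to~$2$, where $(m-1)^2=1$ for $m=2$). This is why the same two graphs serve as counterexamples in the linear setting, and why no genuinely new obstacle arises beyond bookkeeping for the slacks.
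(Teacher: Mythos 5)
Your proof is correct and is precisely the argument the paper intends: the paper omits the proof of Proposition~\ref{prop:counterlinear}, stating only that it is analogous to that of Proposition~\ref{prop:counter}, and you carry out exactly that analogy with the same two graphs and assignments, supplying the slack choices under which each satisfied linear constraint contributes zero and each violated one contributes exactly one. (The only caveat, inherited from the paper's own proof of Proposition~\ref{prop:counter}, is that the second construction assigns color $k-1$ and therefore implicitly requires $k\ge 2$.)
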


\section{Benchmarking}
\label{sec:benchmark}

To illustrate the benefits of the fully characterized QUBO reformulations presented in Section~\ref{sec:MkCS}, we next benchmark the linear-based (Section~\ref{sec:linear}) and nonlinear-based (Section~\ref{sec:nonlinear}) QUBO reformulations of the M$k$CS problem when solving them with a quantum annealer.
For this purpose, we present results pertaining the {\em minimum gap}~\citep[see, e.g.,][]{roland2002quantum}, related to the convergence rate of (an ideal) adiabatic quantum algorithm (AQC), {\em embedding}~\citep[see, e.g.,][]{vyskovcil2019embedding} into the available quantum annealing hardware, and {\em time-to-solution} (TTS)~\citep[see, e.g.,][]{ronnow2014defining} when performing the quantum annealing.

The embedding and TTS benchmarking results are obtained using D-Wave's quantum annealers (\url{https://www.dwavesys.com/}). 
Specifically, we report the different results obtained when using two different  D-Wave processors: 2000Q\textsuperscript{TM} and Advantage 1.1\textsuperscript{TM}.
The main difference between these two processors is the number of available qubits and their connectivity within the processor.
The 2000Q\textsuperscript{TM} processor has 2048 possible qubits (of which 2041 were available) connected in a {\em Chimera} connectivity graph, designed as a grid of $16 \times 16$ cells of $K_{4,4}$ bipartite graphs connected in a nearest-neighbor fashion by means of non-planar edges, where each qubit is connected to at most~6 neighbors~\citep{neven2009nips}.
The Advantage 1.1\textsuperscript{TM} processor counts with 5640 qubits (5510 available) following a {\em Pegasus} connectivity graph, defined as three layers of $16 \times 16$ cells of $K_{4,4}$ bipartite graphs with additional connections within and among the cells, providing an increased connectivity for each qubit to maximum 15 neighbors~\citep{boothby2020next}.

Each QUBO reformulation proposed here is thus used to solve the M$k$CS problem in a D-Wave quantum annealer. These numerical experiments are similar in nature to those carried out in~\citep{hua2020improved, calude2017, verma2020optimal, verma2020penalty} to compare different QUBO formulations of various COPT problems. 

To generate instances $G(V, E)$ for the numerical tests, given the number of nodes $|V| = n$, we generate instances with randomly defined edge set $E$, using Erd\H os-R\'enyi graphs $\G(n,p)$ with probabilities $p=0.25, 0.5, 0.75$ (i.e., with different levels of sparsity). In what follows, for the purpose of brevity, we will sometimes refer to the linear-based QUBO reformulation (Section~\ref{sec:linear}) as L-QUBO, and to the nonlinear-based QUBO reformulation (Section~\ref{sec:nonlinear}) as N-QUBO. All the classical computations are done using an Ubuntu Machine with processor Intel(R) Xeon(R) CPU E5-2630 v4 @ 2.20GHz, 94Gb of RAM, and 20 cores.

\subsection{Quantum Annealing}
Before presenting the benchmarking results, we provide a very brief high level discussion about the ideal version of the quantum annealing algorithm run by D-Wave's quantum annealer; that is, an AQC algorithm~\citep[following][]{cullimore2012relationship, DWAVEanneal}. For additional details, the reader is directed to \citep[][among many others]{farhi2000quantum, king2020performance,amin2008effect}. The fact that complex COPT problems (such as the M$k$CS problem considered here) can be reformulated as QUBO problems, means that finding the optimal solution of the problem can be regarded as finding or sampling low-energy states from an Ising spin model Hamiltonian~$\HH_f$ that is constructed from the QUBO formulation (i.e., using~\citep[][eq. (14)]{cullimore2012relationship}). For that purpose, a simple Hamiltonian~$\HH_i$ with an easily prepared ground (low-energy) state is prepared (i.e., using~\citep[][eq. (12)]{cullimore2012relationship}). Then, by constructing an adequate interpolation~\citep[see, e.g.,][eq. (1)]{cullimore2012relationship} between the $\HH_i$ and  $\HH_f$ Hamiltonians, the system can be set to slowly evolve (so that the {\em adiabatic} theorem~\citep[cf.,][]{farhi2000quantum} is satisfied) from the ground state of $\HH_i$ to a state that will yield the desired low-energy state of  $\HH_f$ with high probability. In particular, here we construct the interpolation:
\begin{equation}
\label{eq:interpolation}
\HH(s) = \frac{A(s)}{2}\HH_i + \frac{B(s)}{2}\HH_f,
\end{equation}
where $s \in [0,1]$ is the adimensional (or reduced) time~$s=\frac{t}{T}$ with~$t$ denoting time and~$T$ denoting the computation time, $A(s)$ is the tunneling energy curve, and $B(s)$ is the problem's Hamiltonian energy curve, for all~$s\in [0,1]$. The specifications of~$A(s)$ and~$B(s)$ for both the D-Wave 2000Q\textsuperscript{TM} and Advantage 1.1\textsuperscript{TM} processors can be found in D-Wave's documentation~\citep{DWAVEanneal}.

\subsection{Minimum Gap}
Now we consider the evolution of the AQC algorithm under~\eqref{eq:interpolation}. It is known that the {\em minimum gap}; that is, the minimum energy gap between the lowest two energy
levels during the evolution of the AQC algorithm determines the time of the
computation~\citep[see, e.g.,][]{farhi2001quantum,amin2012approximate}. Such energy levels correspond with the eigenvalues $E_m(s)$ of the eigenstates $m;s\rangle$ of the Hamiltonian $\HH(s)$, where $m \in \{0,1,\dots,2^{n}-1\}$ in an $n$ qubit system, and  are given by~\citep[see,][]{cullimore2012relationship}:
\begin{equation}
\HH(s)|m;s\rangle = E_m(s)|m;s\rangle, 
\end{equation}
with $E_0(s) \le E_1(s) \le \cdots \le E_{2^n-1}$. Thus, the minimum gap, denoted by $\Delta_{\min}$, is given by~\citep[see,][]{cullimore2012relationship}:
\begin{equation}
\label{eq:mingap}
\Delta_{\min} = \min_{s\in [0,1]}\{E_1(s)- E_0(s)\}.    
\end{equation}
The minimum gap $\Delta_{\min}$ provides a lower bound on the AQC computation time that is inversely proportional to $\Delta_{\min}^2$~\citep[see,][eq.~(9)]{cullimore2012relationship}; that is, the larger $\Delta_{\min}$, the faster the  AQC algorithm is expected to converge to the ground state of the Hamiltonian~$\HH_f$~\citep[see, e.g.,][]{amin2008effect}.

Here, we use the exact diagonalization of the instantaneous time-dependent Hamiltonian $\HH(s)$ to compute $\Delta_{\min}$. Although this methodology is well known to require a prohibitive amount of computation due to the need to diagonalize matrices of size $2^n \times 2^n$ for a system with $n$ qubits; it is suitable for the illustrative numerical tests performed here (for less computationally expensive ways to approximately compute $\Delta_{\min}$, we refer the readers to~\citep{amin2012approximate}). 

In particular, to obtain the minimum gap results presented next, we begin by calculating, for a number of finite values of $s \in [0,1]$ (see details below), the Hamiltonian~$\HH(s)$ in~\eqref{eq:interpolation} for particular instances of the L-QUBO (resp. N-QUBO) formulation of the M$k$CS problem. Then, we verify which eigenvalues of $\HH(s)$ correspond to different states in the annealing process. Two states are considered different if at the end of the annealing, their energy difference is more than a given $\varepsilon$. Here, we use $\varepsilon = 1$ GHz. Finally, we approximately compute the minimum difference along the annealing scaled time $s$ of the two smallest eigenvalues corresponding to different states (the ground state and the first excited state). The approximation comes from the fact that the minimum in~\eqref{eq:mingap} is computed over a finite set of values of $s \in [0,1]$. Specifically, given the monotonic behavior of $A(s)$ and $B(s)$, we observe that $\Delta_{\min}$ is attained at a value $s \in [0,1]$ that is close to the value of $s \in [0,1]$ in which the maximum of the minimum eigenvalue is attained. Moreover, for the instances considered here, 
the minimum eigenvalue is concave on~$s \in [0,1]$, allowing us to more efficiently sample the domain $[0,1]$ in search for the value of $\Delta_{\min}$. Namely,
we first consider a coarse discretization of the domain~$[0,1]$ 
(taking only $10$ equally spaced elements of the set) and then determine the three points whose middle point would be larger than both its neighbors.
This interval contains the values of $s \in [0,1]$ in which both the maximum value of the smallest eigenvalue, and $\Delta_{\min}$ is attained.
We sample this interval by computing $10$ points between each of these points (including them) to refine the approximation to the value $s^*$ in which the minimization in~\eqref{eq:mingap} is attained.
This procedure only requires~$44$ computations of the eigenvalues of the Hamiltonian~$\HH(s)$.

\subsubsection{Minimum Gap Results}
\label{sec:mingap}
Next, we compare the minimum gap ($\Delta_{\min}$) resulting when using the L-QUBO reformulation and the N-QUBO reformulations for the M$k$CS problem with varying penalty parameters (i.e., $c_1$, $c_2$) on small instances of the M$k$CS problem.

\begin{figure}[!htb]
    \centering
    \begin{minipage}[t]{.45\textwidth}
        \centering
        \includegraphics[width=0.9\linewidth]{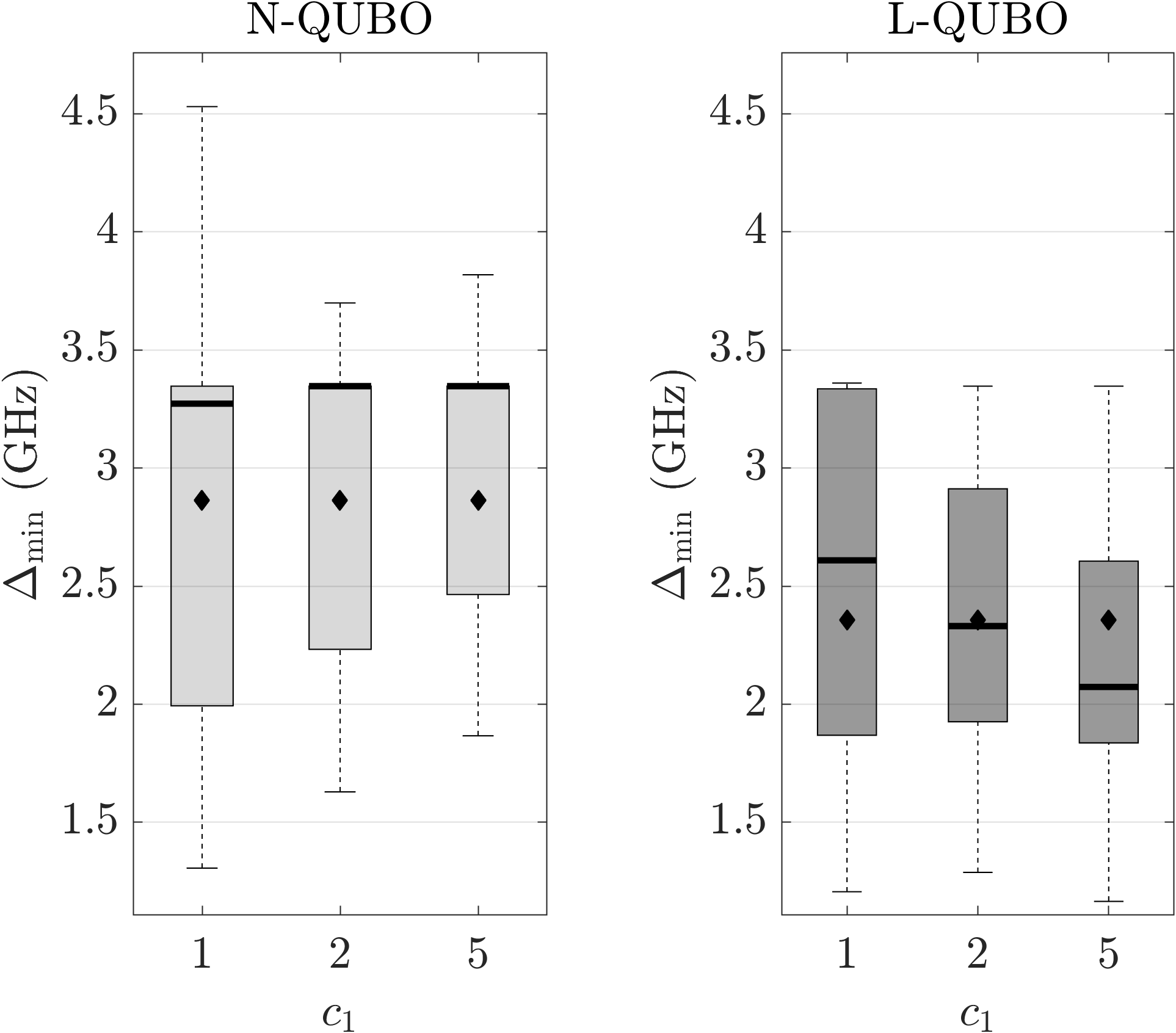}
        \caption{$k=1$, $\G(5,0.25)$. \label{fig:mingap1_25}}
    \end{minipage}
    \begin{minipage}[t]{0.45\textwidth}
        \centering
        \includegraphics[width=0.9\linewidth]{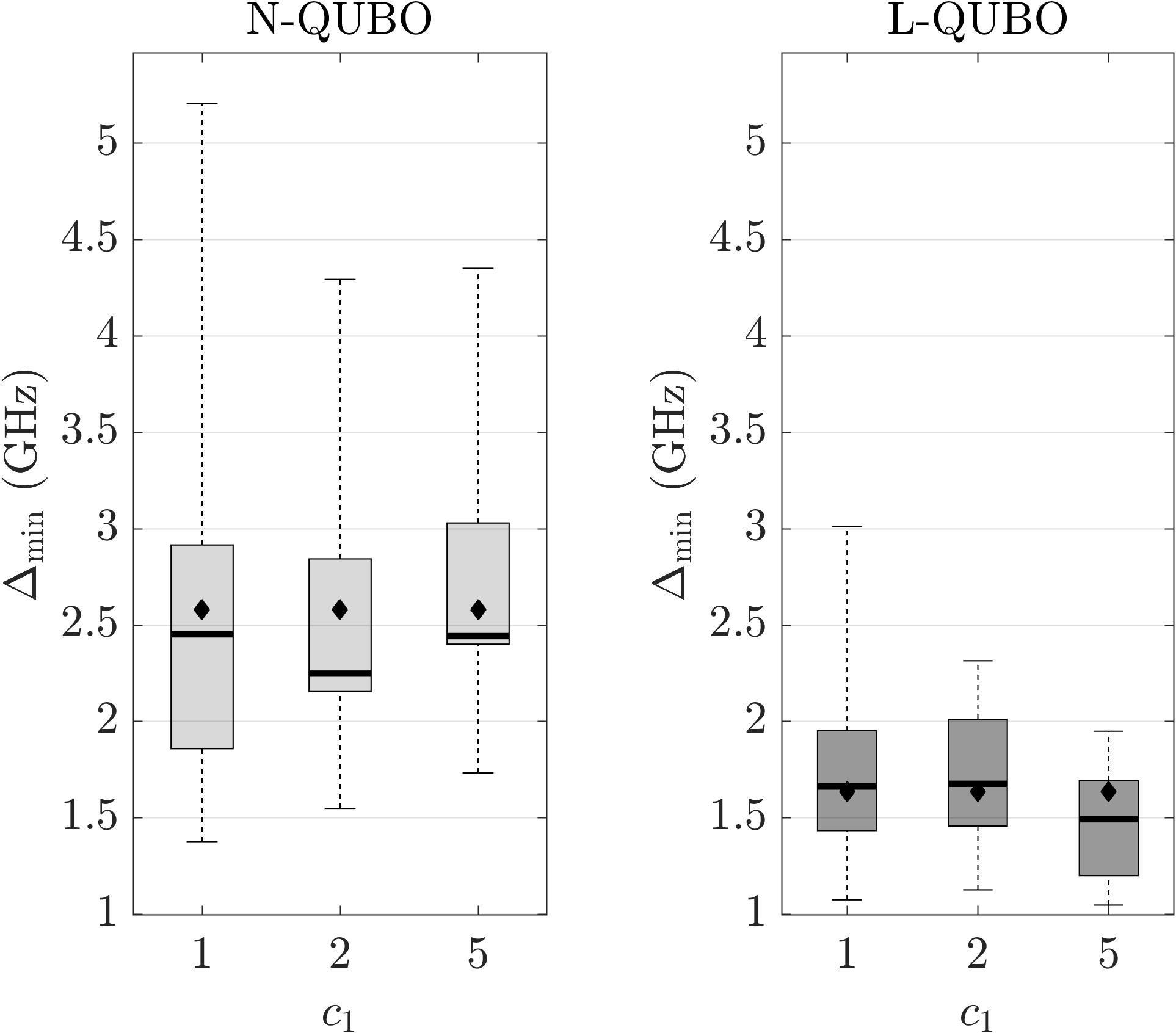}
        \caption{$k=1$, $\G(5,0.75)$. \label{fig:mingap1_75}}
    \end{minipage}
\end{figure}

In particular, Figures~\ref{fig:mingap1_25} and~\ref{fig:mingap1_75}
compare the $\Delta_{\min}$ obtained from the L-QUBO~\eqref{eq:qubolinear} and N-QUBO~\eqref{eq:qubononlinear} for instances of the M$k$CS problem in which $k=1$, where the underlying graphs are randomly selected $\G(5,0.25)$ and $\G(5,0.75)$ graphs. These bar plots, as well as the remaining ones in this section, provide information about the distribution of $\Delta_{\min}$. Specifically, each bar is obtained by computing $\Delta_{\min}$ on $100$ randomly generated graphs $\G(5,p)$, $p \in \{0.25, 0.50, 0.75\}$, for different values of the  penalization parameters. The tick line, represents the median of $\Delta_{\min}$, the black diamond represents $\overline{\Delta}_{\min}$, the average of $\Delta_{\min}$, the bar encompasses the values within the 25\% and 75\% quantiles of $\Delta_{\min}$'s distribution, and the dotted interval encompasses the values within the $0$\% and $100$\% of $\Delta_{\min}$'s distribution.

From Figures~\ref{fig:mingap1_25} and~\ref{fig:mingap1_75}, it follows that the N-QUBO results in higher $\overline{\Delta}_{\min}$ than the L-QUBO; therefore, in theory, the N-QUBO Hamiltonian should converge faster to a low energy state than the L-QUBO Hamiltonian. We can state this more formally by performing a simple hypothesis test. Let $\overline{\Delta}_{\min}^a(k,p,c_1,c_2)$ (resp. $\mu_{{\Delta}_{\min}^a}(k,p,c_1,c_2)$) be the average (resp. mean) of the minimum gap for the $a$-QUBO formulation of the M$k$CS instance from $\G(5,p)$ graphs, and penalty parameters $c_1$, $c_2$. Then, consider the hypothesis test:
\begin{equation}
\label{eq:hypo1}
\begin{array}{llll}
H_o: & \mu_{{\Delta}_{\min}^{\rm L}}(1,p,1,\cdot)  \ge 
\mu_{{\Delta}_{\min}^{\rm N}}(1,p,1,\cdot) - \delta \overline{\Delta}_{\min}^{\rm N}(1,p,1,\cdot)\\
H_a: & \mu_{{\Delta}_{\min}^{\rm L}}(1,p,1,\cdot)  < 
\mu_{{\Delta}_{\min}^{\rm N}}(1,p,1,\cdot) - \delta \overline{\Delta}_{\min}^{\rm N}(1,p,1,\cdot),\\
\end{array}
\end{equation}
where $\delta \in [0,100\%]$. That is, in~\eqref{eq:hypo1} we are statistically comparing the left-most bars of the N-QUBO subplot and the L-QUBO subplot of Figures~\ref{fig:mingap1_25} and~\ref{fig:mingap1_75} (for brevity, the results for the case $p=0.50$ have not bee plotted), under the null hypothesis that the L-QUBO provides a higher mean $\Delta_{\min}$. Then, for any $p \in \{0.25, 0.50, 0.75\}$ one gets that the null hypothesis $H_o$ in~\eqref{eq:hypo1} can be rejected with 95\% confidence for values of $\delta$ up to $2\%$. Thus, loosely speaking, the N-QUBO results in values of $\Delta_{\min}$ that on average are 2\% higher than the ones obtained by the L-QUBO, when using penalty parameter $c_1 = 1$.

One advantage of having the full characterization of the penalty constants, for which the QUBO formulations~\eqref{eq:qubolinear} and~\eqref{eq:qubononlinear} become reformulations of the M$k$CS problem, is that we can investigate what are the trade-offs of increasing such penalty values from their minimum ones. Intuitively, one might expect that $\Delta_{\min}$ increases (faster convergence) as the values of the penalty parameters $c_1, c_2$ increase. This reasoning stem from the fact that higher penalty parameters increase the suboptimality of infeasible solutions of the original problem in its associated QUBO reformulation. However, from Figures~\ref{fig:mingap1_25} (left) and~\ref{fig:mingap1_75} (left) it follows that $\overline{\Delta}_{\min}$ remains fairly unchanged as the penalty parameter $c_1$ increases. More formally consider a similar hypothesis test to the one considered in~\eqref{eq:hypo1}.
\begin{equation}
\label{eq:hypo2}
\begin{array}{llll}
H_o: & \mu_{{\Delta}_{\min}^{\rm N}}(1,p,c_1,\cdot)  \ge 
\mu_{{\Delta}_{\min}^{\rm N}}(1,p,1,\cdot) + \delta \overline{\Delta}_{\min}^{\rm N}(1,p,1,\cdot)\\
H_a: & \mu_{{\Delta}_{\min}^{\rm N}}(1,p,c_1,\cdot)  < 
\mu_{{\Delta}_{\min}^{\rm N}}(1,p,1,\cdot) + \delta \overline{\Delta}_{\min}^{\rm N}(1,p,1,\cdot).\\
\end{array}
\end{equation}
Then, for any $p \in \{0.25, 0.50, 0.75)\}, c_1 \in \{2,5\}$, one gets that the null hypothesis $H_o$ in~\eqref{eq:hypo1} can be rejected with 95\% confidence for values of $\delta$ less than $1\%$. Thus, loosely speaking, the N-QUBO with penalty parameter $c_1=1$ results in values of $\Delta_{\min}$ that on average are not 1\% lower than the ones obtained by the N-QUBO with higher penalty parameters $c_1 \in \{2,5\}$.

It is clearly interesting to investigate how the characteristics above look when considering higher values of $k$. Due to the complexity of the exact diagonalization procedure used to compute $\Delta_{\min}$, we limit to the study of the case $k=2$ for the N-QUBO reformulation, considering penalty parameters constants $c_1 \in \{1,2,5\}$, and $c_2 \in \{1,2,5\}$.

\begin{figure}[!htb]
    \centering
    \begin{minipage}[t]{.45\textwidth}
        \centering
        \includegraphics[width=0.9\linewidth]{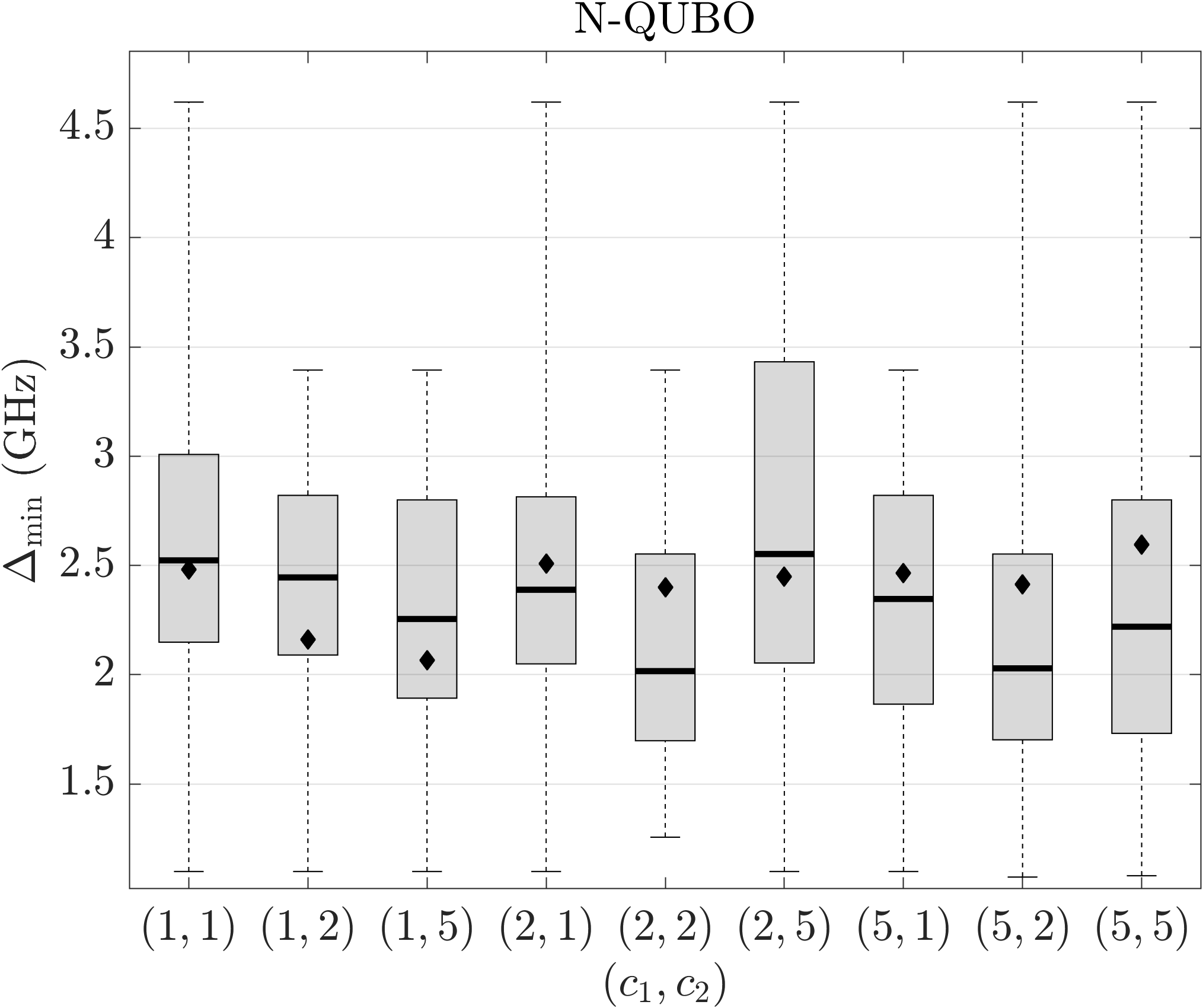}
        \caption{$k=2$, $\G(5,0.25)$. \label{fig:mingap2_25}}
    \end{minipage}
    \begin{minipage}[t]{0.45\textwidth}
        \centering
        \includegraphics[width=0.9\linewidth]{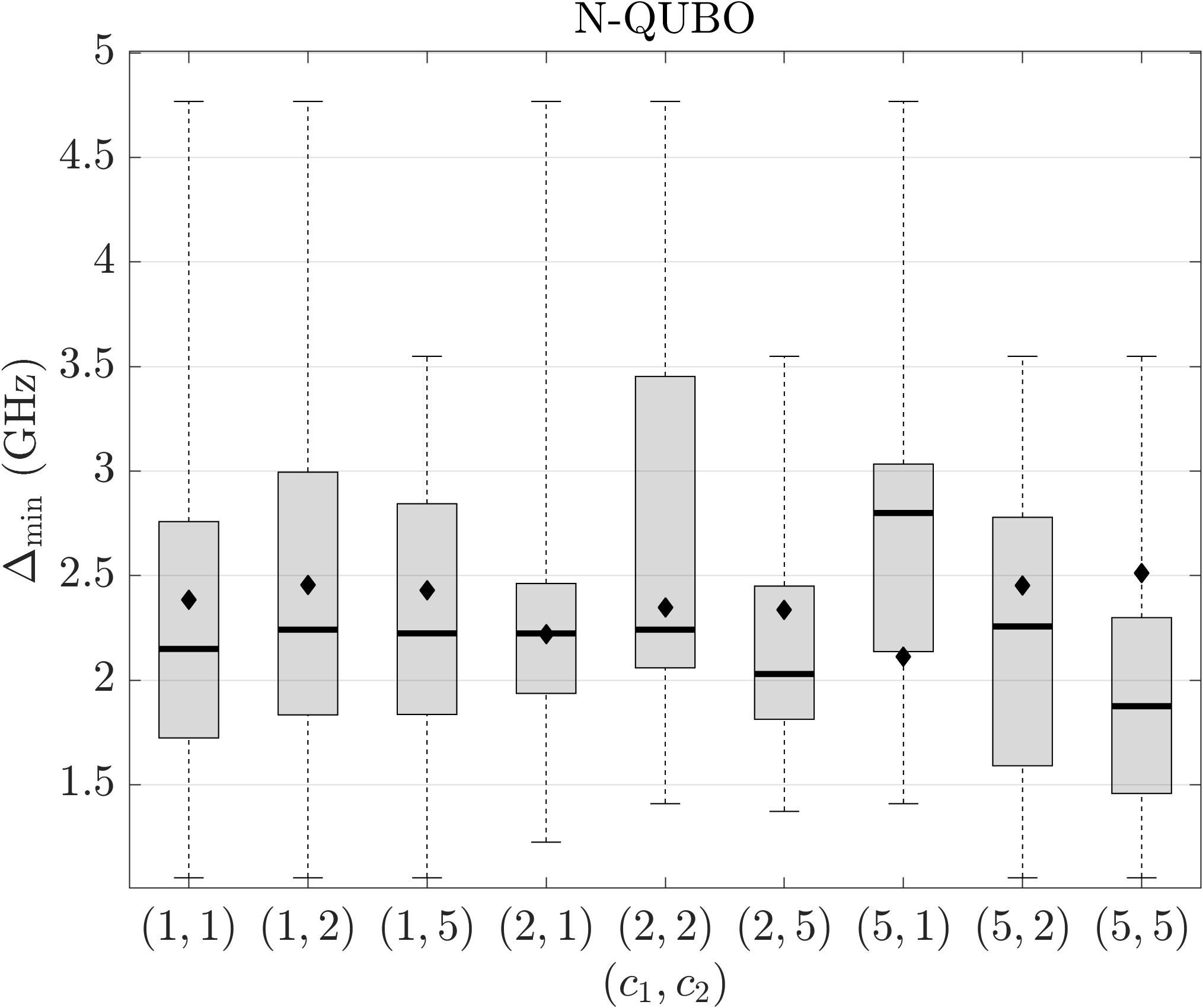}
        \caption{$k=2$, $\G(5,0.75)$. \label{fig:mingap2_75}}
    \end{minipage}
\end{figure}

Much like in the case when $k=1$, from Figures~\ref{fig:mingap2_25} and~\ref{fig:mingap2_75} it follows that $\overline{\Delta}_{\min}$ barely increases as the penalty parameters $c_1, c_2$ increase, in comparison to the value of $\overline{\Delta}_{\min}$ when the penalty parameters are set to $c_1 = c_2 = 1$. Statistically, things are a bit different. Formally, consider a similar hypothesis test to the one considered in~\eqref{eq:hypo2}.
\begin{equation}
\label{eq:hypo3}
\begin{array}{llll}
H_o: & \mu_{{\Delta}_{\min}^{\rm N}}(2,p,c_1,c_2)  \ge 
\mu_{{\Delta}_{\min}^{\rm N}}(2,p,1,1) + \delta \overline{\Delta}_{\min}^{\rm N}(2,p,1,1)\\
H_a: & \mu_{{\Delta}_{\min}^{\rm N}}(2,p,c_1,c_2)  < 
\mu_{{\Delta}_{\min}^{\rm N}}(2,p,1,1) + \delta \overline{\Delta}_{\min}^{\rm N}(2,p,1,1).\\
\end{array}
\end{equation}
Table~\ref{tab:deltas} shows the minimum value of $\delta$ in~\eqref{eq:hypo3} for which the null hypothesis $H_o$ in~\eqref{eq:hypo2} can be rejected with a 95\% confidence level. Loosely speaking, the value of $\delta$ indicates the percentage by which the value of $\overline{\Delta}_{\min}(2,p,1,1)$ must be higher in order to reject the hypothesis that larger penalty parameters (i.e., larger than $c_1 =c_2 = 1$) result in a larger mean value of $\Delta_{\min}$.

\begin{table}[!htb]
{\footnotesize
\begin{center}
\begin{tabular}{cccccc}
\toprule
                     &\multicolumn{3}{c}{$\delta$}                &         \\
                     \cmidrule{2-4}
$(c_1, c_2)$ & $p=0.25$ & $p=0.50$ & $p=0.75$ & $H_o$\\
\midrule
$(1,2)$ & -5\% & 8\% &  12\% & Reject\\
$(1,5)$ & -5\% & 5\% &  11\% & Reject\\
$(2,1)$ & 1\% & 4\% &  1\% &Reject\\
$(2,2)$ & 1\% & 12\% &  7\% &Reject\\
$(2,5)$ & 1\% & 13\% &  6\% &Reject\\
$(5,1)$ & 1\% &2\% &  0\% & Reject\\
$(5,2)$ & 4\% & 15\% &  11\% &Reject\\
$(5,5)$ & 11\% &22\% &  13\% & Reject\\
\bottomrule
\end{tabular}
\caption{Hypothesis test~\eqref{eq:hypo3} for different parameters with 95\% confidence. \label{tab:deltas}}
\end{center}
}
\end{table}

Overall, there is a recognizable pattern in Table~\ref{tab:deltas}. Namely, it is clear that the sparser the underlying graph (i.e., lower probability~$p$) used to construct the instance of the M$k$CS problem, the smaller the effect of increasing penalty parameters is on increasing the mean of $\Delta_{\min}$ (i.e., accelerating convergence of an AQC algorithm). This is intuitively expected, given that sparsity in the underlying graph results in lower number of penalty terms in the N-QUBO~\eqref{eq:qubononlinear}. In particular, notice that a significant increase in the mean of~$\Delta_{\min}$, for sparse underlying graphs (i.e., $p=0.25$), only arises when the penalty parameters are increased from $c_1=c_2 = 1$ to $c_1=c_2 = 5$. In contrast, for non-sparse underlying graphs (i.e., $p=0.75$), increases in the penalty constants above $c_1=c_2 = 1$ bring increases in the mean of~$\Delta_{\min}$ of about 10\% in most cases. In Section~\ref{sec:TTS}, we will analyze how these increases in the mean of~$\Delta_{\min}$ affect the convergence to a solution in D-Wave's quantum annealing devices. Before doing this analysis, we first consider the differences in terms of embedding requirements between the N-QUBO and L-QUBO formulation.


\subsection{Embedding}
\label{sec:embed}

Current NISQ devices have a low number of qubits available with restricted connectivity.
Given this, the number of qubits required to {\em embed}~\citep[see, e.g.,][]{vyskovcil2019embedding} a QUBO reformulation of a given COPT problem in a quantum device is a very important benchmark to compare the benefits of different QUBO reformulations of a COPT problem~\citep[see, e.g.,][]{hua2020improved, fowler2017improved, verma2020optimal, verma2020penalty}. Next, to benchmark the N-QUBO~\eqref{eq:qubononlinear} versus the L-QUBO~\eqref{eq:qubolinear} reformulation of the M$k$CS problem, we 
use the number of qubits needed to embed the QUBO reformulation in both a 2048 qubits
{\em Chimera} connectivity graph (for D-Wave's 2000Q\textsuperscript{TM} processor), and a 5640 qubits {\em Pegasus} connectivity graph (for D-Wave's Advantage 1.1\textsuperscript{TM} processor). 
For this purpose, we use D-Wave's embedding algorithm~\citep[see, e.g.,][for a discussion of different embedding algorithms]{bernal2020integer, zbinden2020embedding}.

\begin{figure}[!htb]
    \centering
    \begin{minipage}[t]{.45\textwidth}
        \centering
        \includegraphics[width=\linewidth]{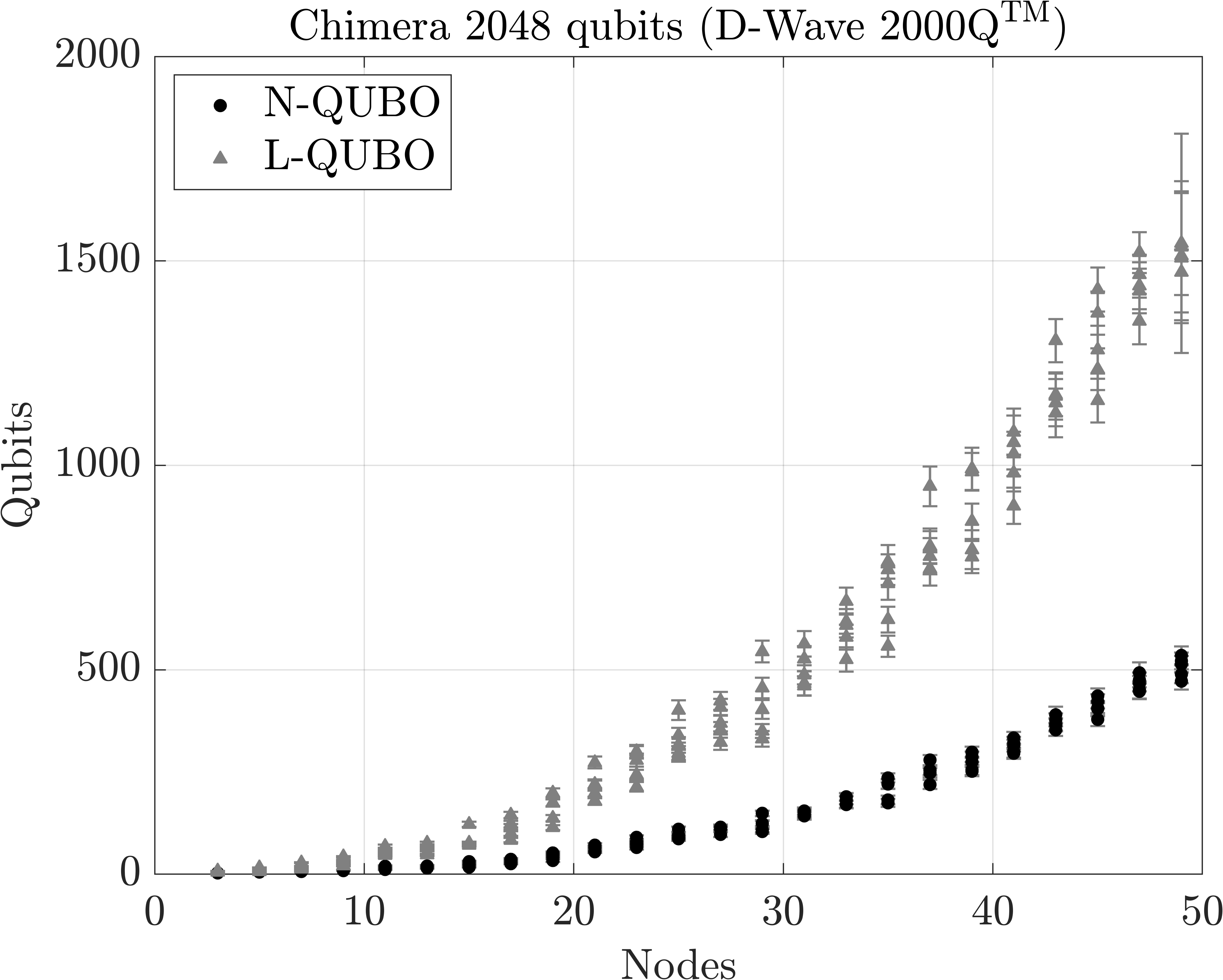}
        \caption{Embeding: $k=1$, $\G(n,0.25)$. \label{fig:embed1chi_25}}
    \end{minipage}
    \begin{minipage}[t]{0.45\textwidth}
        \centering
        \includegraphics[width=\linewidth]{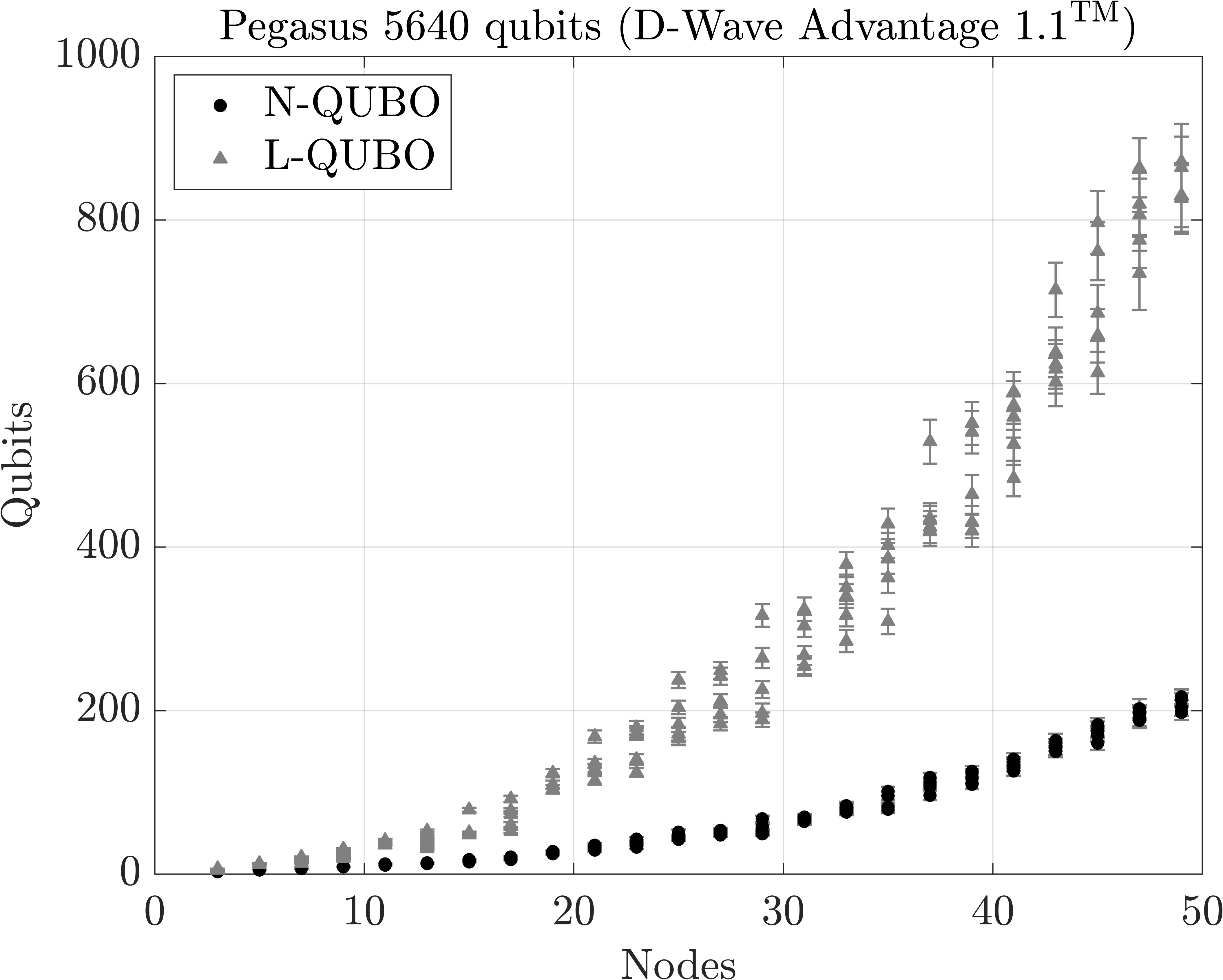}
        \caption{Embedding: $k=1$, $\G(n,0.25)$. \label{fig:embed1peg_25}}
    \end{minipage}
\end{figure}

Note that for a graph $G(V,E)$ with $n$ nodes, the number of binary variables required to formulate the L-QUBO for the associated M$k$CS problem is $k(n+ |E| +1)$, while $kn$ binary variables are needed to formulate the associated N-QUBO. Not surprisingly, the N-QUBO would require less number of qubits than the L-QUBO when these QUBOs are embedded into D-Wave's quantum annealers. The following results show how the increased number of binary variables required by the L-QUBO affects the difference between the qubits required to embed both QUBO formulations.

\begin{figure}[!htb]
    \centering
    \begin{minipage}[t]{.45\textwidth}
        \centering
        \includegraphics[width=\linewidth]{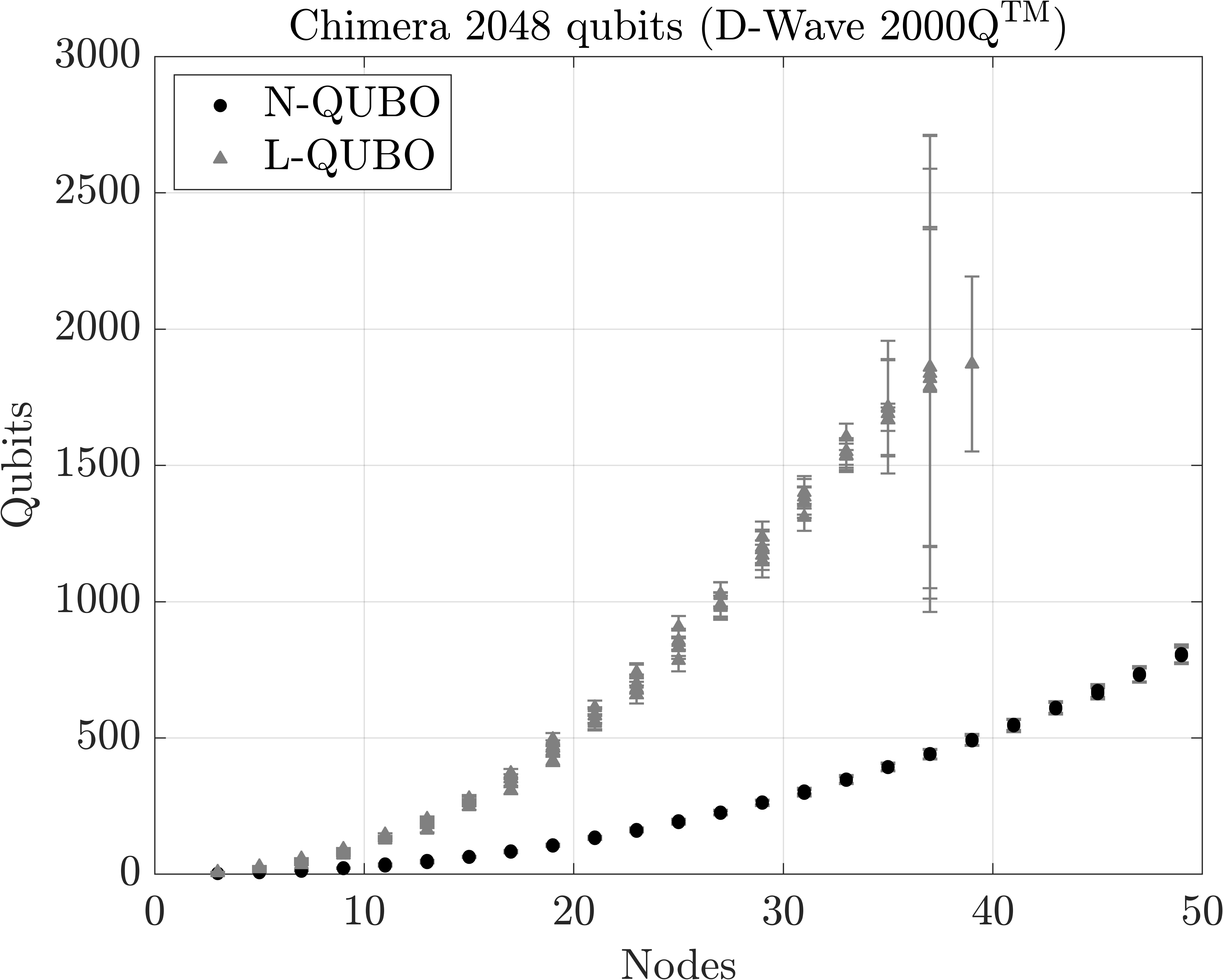}
        \caption{Embeding: $k=1$, $\G(n,0.75)$. \label{fig:embed1chi_75}}
    \end{minipage}
    \begin{minipage}[t]{0.45\textwidth}
        \centering
        \includegraphics[width=\linewidth]{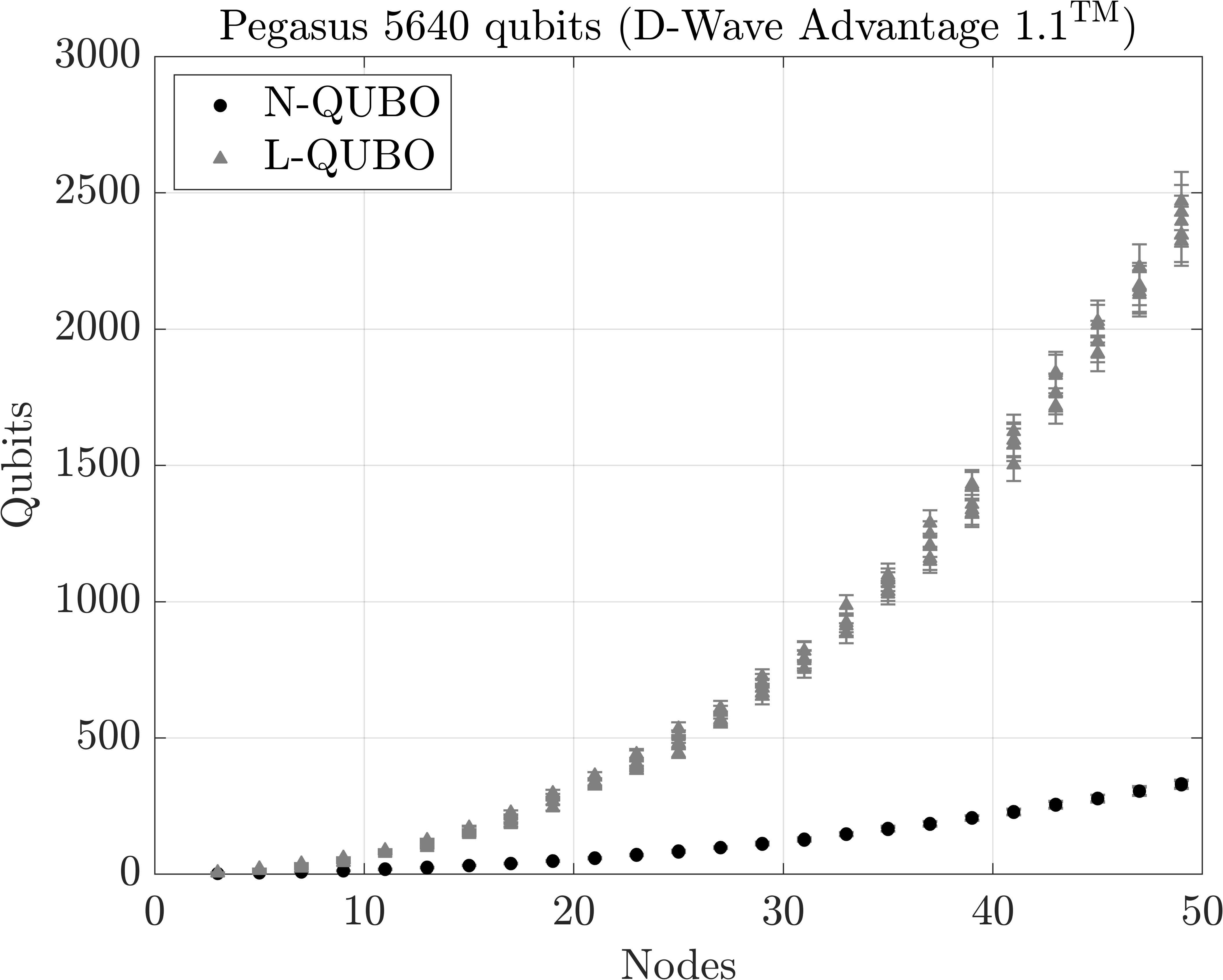}
        \caption{Embedding: $k=1$, $\G(n,0.75)$. \label{fig:embed1peg_75}}
    \end{minipage}
\end{figure}

In Figures~\ref{fig:embed1chi_25}-\ref{fig:embed5peg_25}, the number of average qubits required by both the L-QUBO and the N-QUBO formulations are plotted for values of $k\in \{1,2,5\}$, graphs $\G(n,p)$ for values of $n \in [5,50]$, $p \in \{0.25, 0.50, 0.75\}$, and D-Wave's  2000Q\textsuperscript{TM} and Advantage 1.1\textsuperscript{TM} processors. The average is computed over five (5) random graphs $\G(n,p)$ generated for each combination of $n, p$ values, as well as ten (10) runs of D-Wave's embedding algorithm. The bars plotted with each point in the graph represent the values within one standard deviation of the average value. 

\begin{figure}[!htb]
    \centering
    \begin{minipage}[t]{.45\textwidth}
        \centering
        \includegraphics[width=\linewidth]{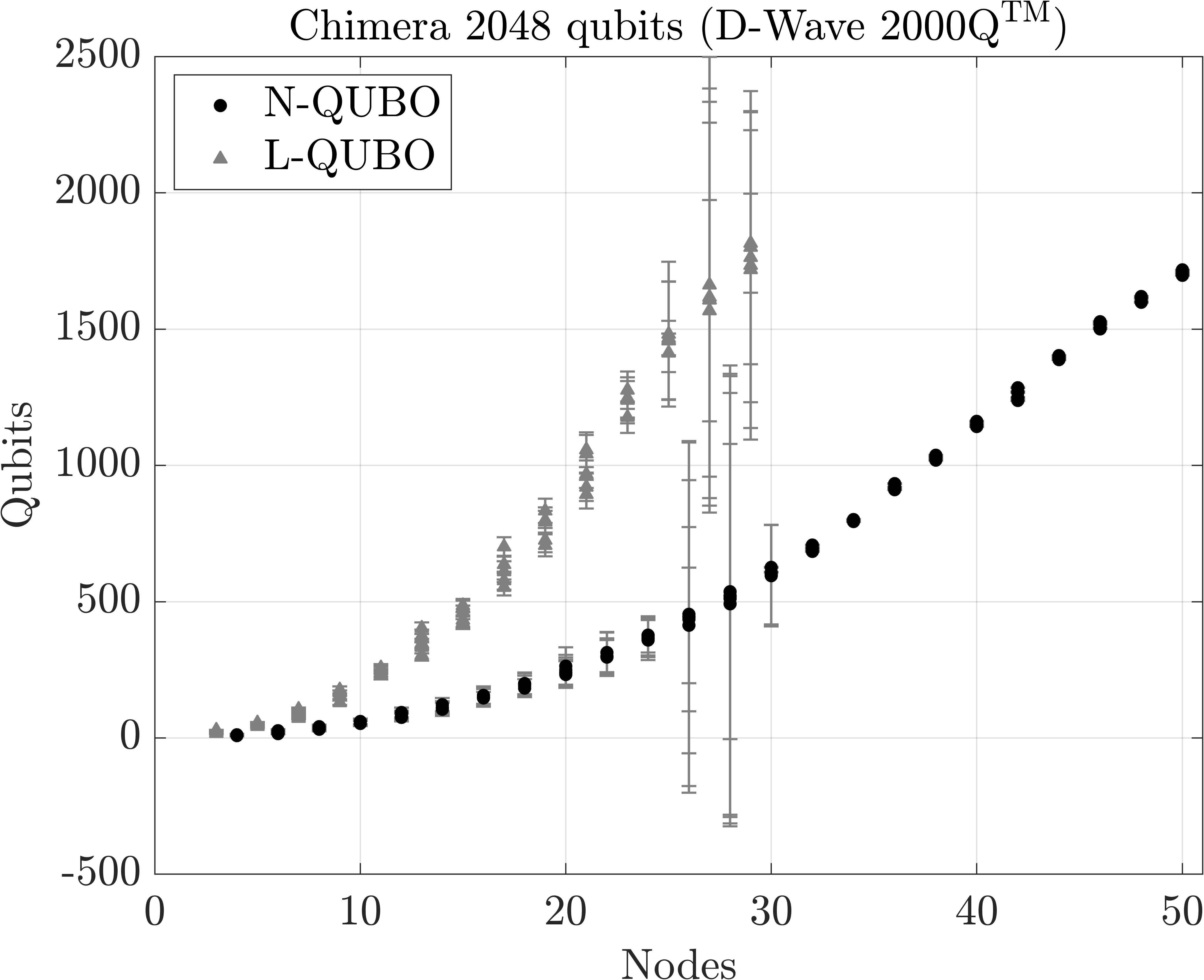}
        \caption{Embeding: $k=2$, $\G(n,0.50)$. \label{fig:embed2chi_50}}
    \end{minipage}
    \begin{minipage}[t]{0.45\textwidth}
        \centering
        \includegraphics[width=\linewidth]{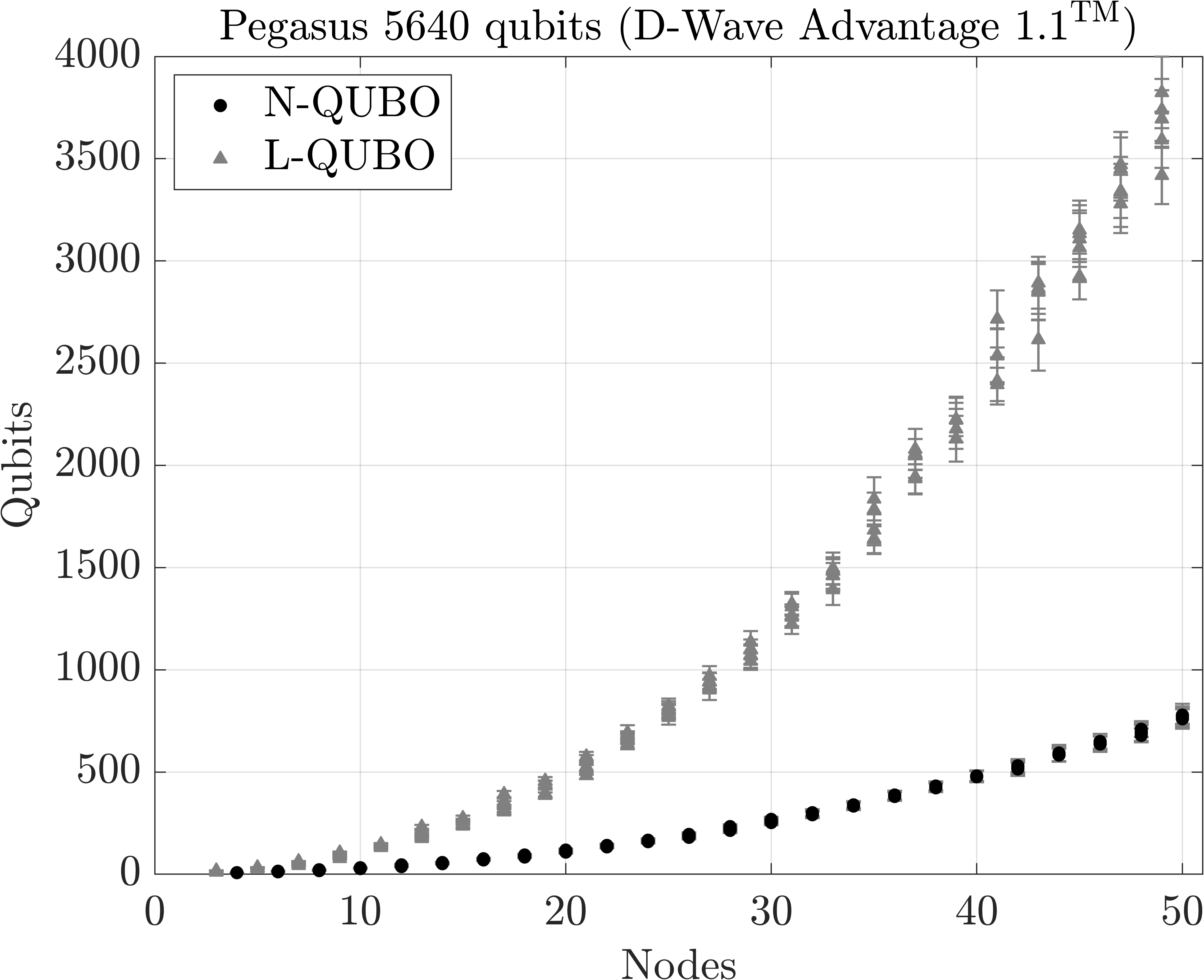}
        \caption{Embedding: $k=2$, $\G(n,0.50)$. \label{fig:embed2peg_50}}
    \end{minipage}
\end{figure}

From all Figures~\ref{fig:embed1chi_25}-\ref{fig:embed5peg_25}, it is clear that in terms of embedding requirements, the N-QUBO formulation is substantially better than the L-QUBO formulation. This is true not only in terms of the average qubits required to embed each QUBO, but the volatility of the number of qubits required to embed each QUBO. In Figure~\ref{fig:embed1chi_25}, in which sparse graphs (i.e., $p=0.25$) are used for the case $k=1$, both QUBO formulations can be embedded, for graphs with up to $n=50$, in D-Wave's 2000Q\textsuperscript{TM} processor. However, in Figure~\ref{fig:embed1chi_75}, where dense graphs (i.e., $p=0.75$) are considered, now the L-QUBO can be embedded only for graphs with up to $n=40$. From Figures~\ref{fig:embed2chi_50} and~\ref{fig:embed5chi_25} it is clear that as $k$ and $p$ increase, this trend of being able to embed larger problems in terms of number of nodes $n$ continues to be evidenced even more. Even using the more {\em powerful} Advantage 1.1\textsuperscript{TM} processor, Figure~\ref{fig:embed5peg_25}  shows that for sparse graphs (i.e., $p=0.25$) and $k=5$, the L-QUBO can only be embedded for graphs with up to $n=40$, while it seems that the N-QUBO can be embedded for graphs with up to $n=80$ (i.e., the double number of nodes).

\begin{figure}[!htb]
    \centering
    \begin{minipage}[t]{.45\textwidth}
        \centering
        \includegraphics[width=\linewidth]{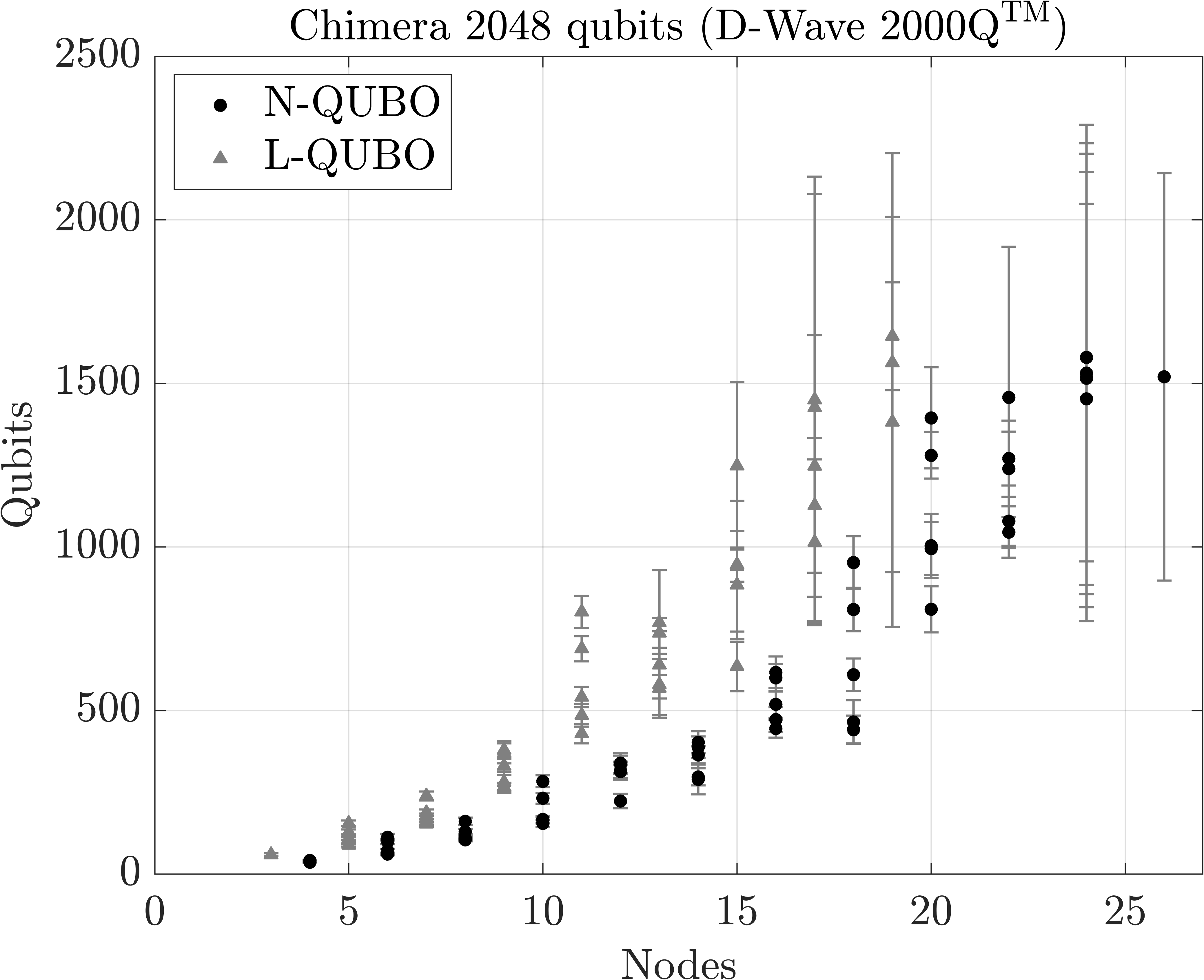}
        \caption{Embeding: $k=5$, $\G(n,0.25)$. \label{fig:embed5chi_25}}
    \end{minipage}
    \begin{minipage}[t]{0.45\textwidth}
        \centering
        \includegraphics[width=\linewidth]{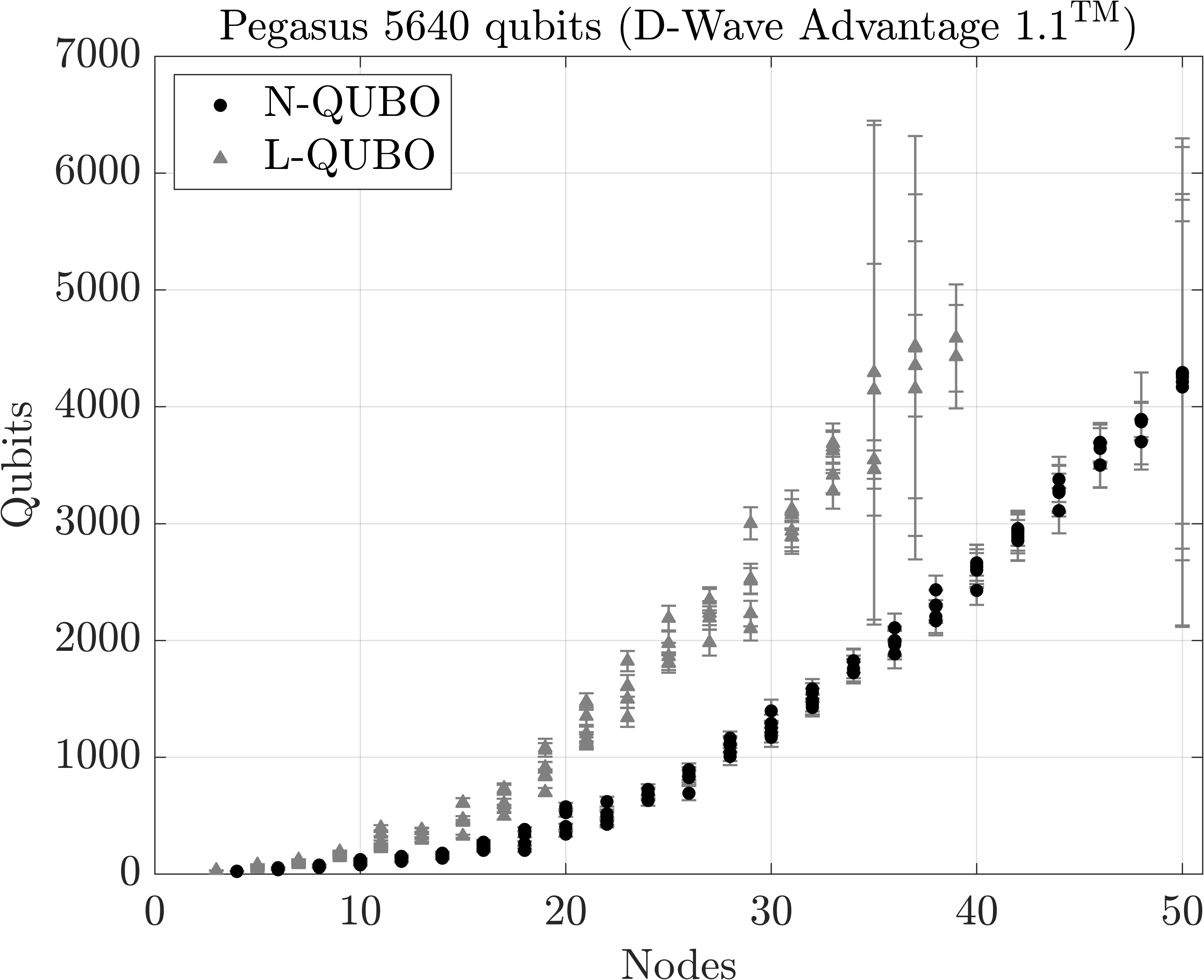}
        \caption{Embedding: $k=5$, $\G(n,0.25)$. \label{fig:embed5peg_25}}
    \end{minipage}
\end{figure}

By pairwise comparing Figures~\ref{fig:embed1chi_25}, \ref{fig:embed1chi_75}, \ref{fig:embed2chi_50}, \ref{fig:embed5chi_25}, versus Figures~\ref{fig:embed1peg_25}, \ref{fig:embed1peg_75}, \ref{fig:embed2peg_50}, \ref{fig:embed5peg_25}, one can see the advantages of the Advantage 1.1\textsuperscript{TM} processor versus the 2000Q\textsuperscript{TM} processor. The effect of having a larger number of qubits clearly means that larger instances of the L-QUBO can be embedded in the Advantage 1.1\textsuperscript{TM} processor than in the 2000Q\textsuperscript{TM} processor. Also evident are the effects of the improved connectivity between qubits in the Advantage 1.1\textsuperscript{TM} processor. Namely, it is clear that the Advantage 1.1\textsuperscript{TM} processor is able to embed QUBO problems using a substantially lower number of qubits than in the 2000Q\textsuperscript{TM} processor. For example, from Figures~\ref{fig:embed1chi_75} and~\ref{fig:embed1peg_75}, it takes about 800 qubits to embed the N-QUBO of graphs with 50 nodes in the 2000Q\textsuperscript{TM} processor, while it takes about half the number of qubits (about 400) to embed the N-QUBO of graphs with 50 nodes in the Advantage 1.1\textsuperscript{TM} processor. The pairwise comparison between Figures~\ref{fig:embed1chi_25}, \ref{fig:embed1chi_75}, \ref{fig:embed2chi_50}, \ref{fig:embed5chi_25}, and Figures~\ref{fig:embed1peg_25}, \ref{fig:embed1peg_75}, \ref{fig:embed2peg_50}, \ref{fig:embed5peg_25}, also shows how the added connectivity in the Advantage 1.1\textsuperscript{TM} processor clearly lowers the volatility of the number of qubits required to embed a QUBO using D-Wave's embedding algorithm.

\subsection{Time-To-Solution}
\label{sec:TTS}
We now finish our numerical tests by comparing (mirroring some of the tests in Sections~\ref{sec:mingap} and~\ref{sec:embed}) the {\em time-to-solution} (TTS) required, by both D-Wave's quantum annealer processors, when using the N-QUBO and L-QUBO reformulation of the M$k$CS, with penalty parameters $c_1 \in \{1,2,5\}, c_e \in \{1,2,5\}$, for values $k =2$, on random graphs $\G(n,p)$ for $n \in [5,50]$, $p \in \{0.25, 0.75\}$ (similar results were obtained for $k\in \{1,5\}$, and $p=0.50$ but are not presented for brevity). Besides benchmarking the N-QUBO versus the L-QUBO reformulation of the M$k$CS, these tests will be used to analyze the effect of the value of penalization constants in the TTS, and the effect in the TTS of using the more powerful Advantage 1.1\textsuperscript{TM} processor.

\begin{figure}[!htb]
    \centering
    \begin{minipage}[t]{.45\textwidth}
        \centering
        \includegraphics[width=\linewidth]{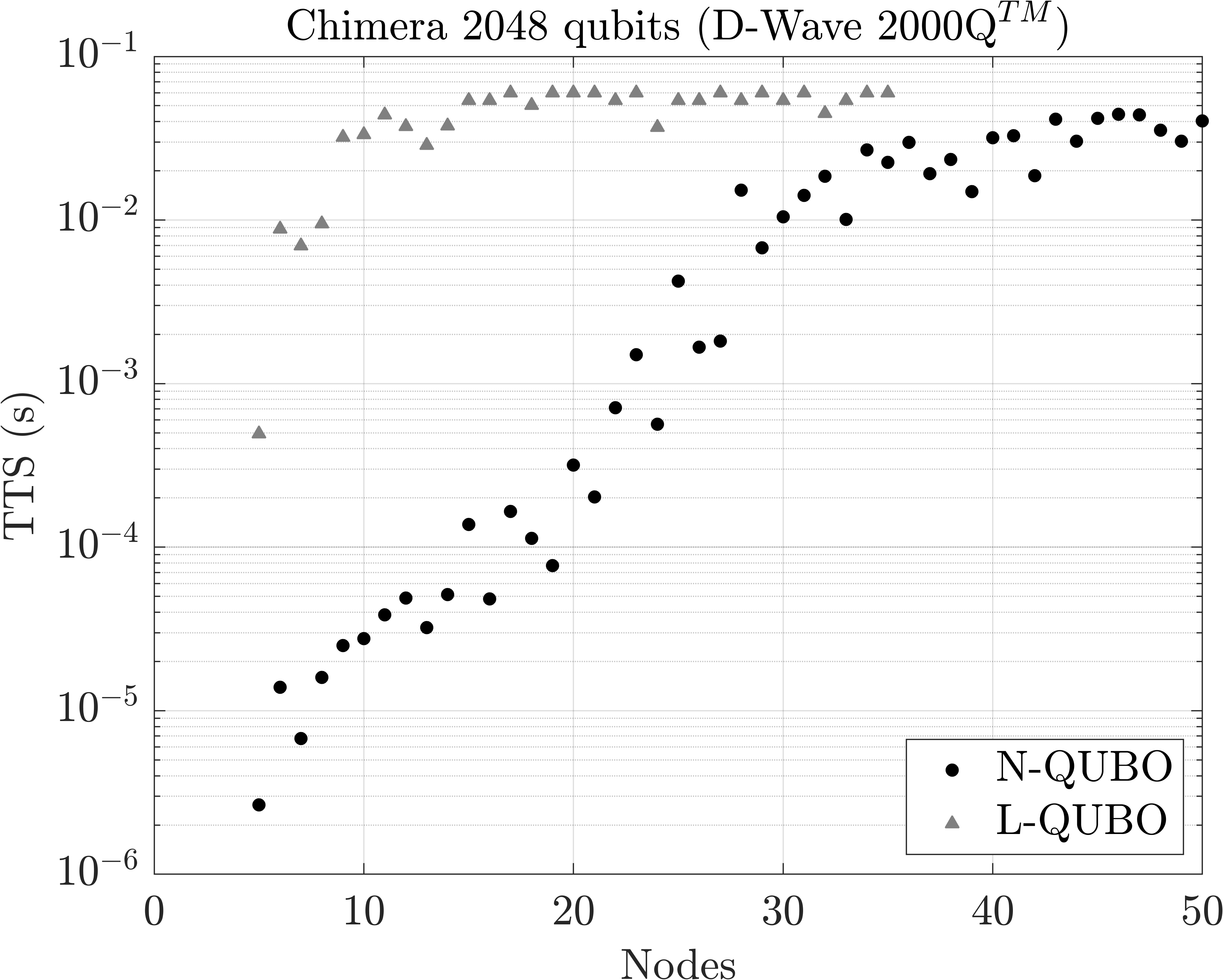}
        \caption{$k=2$, $\G(n,0.25)$, $c_1=c_2=1$. \label{fig:TTS_Chi_k2_p25_c1(1)_c2(1)}}
    \end{minipage}
    \begin{minipage}[t]{0.45\textwidth}
        \centering
        \includegraphics[width=\linewidth]{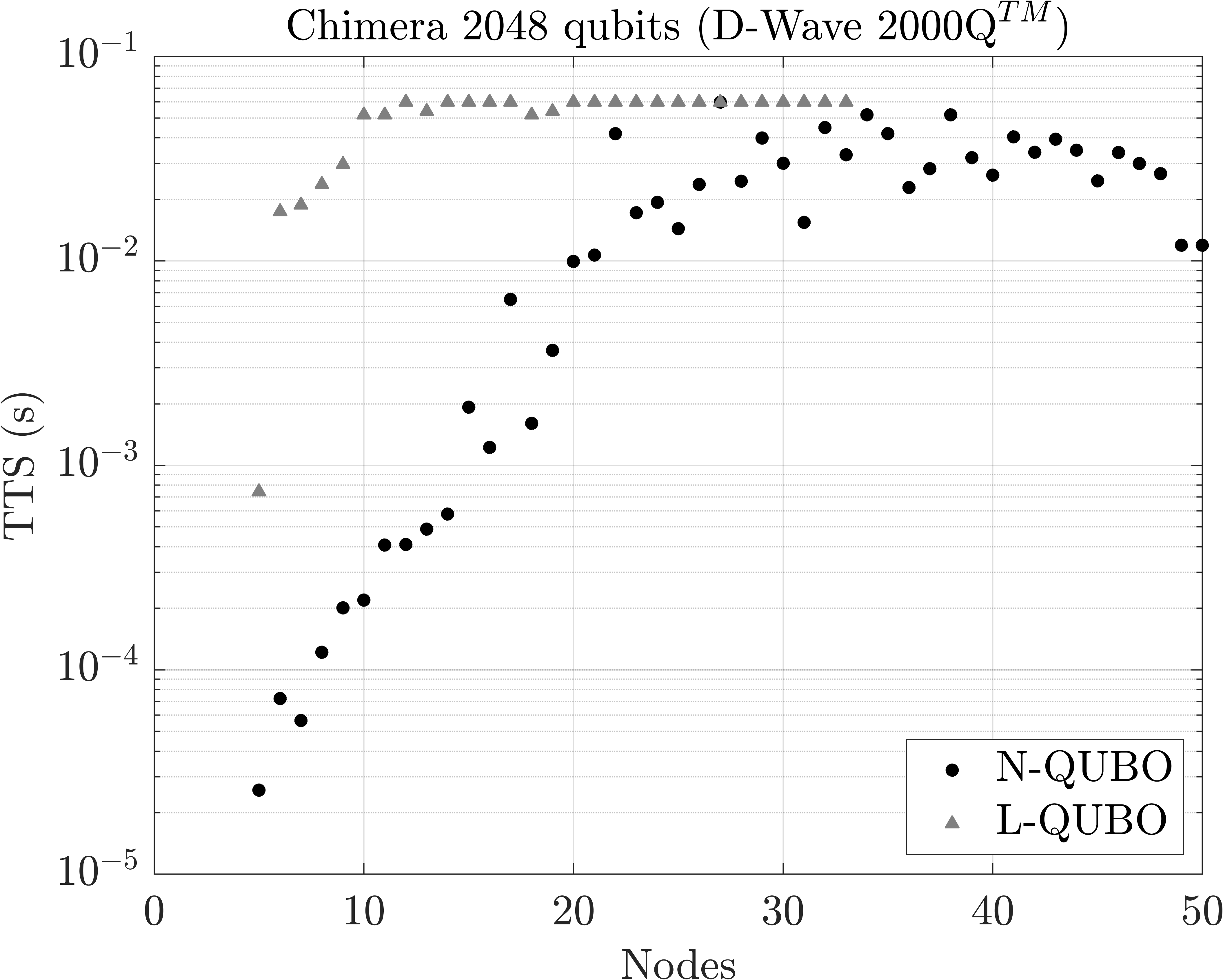}
        \caption{$k=2$, $\G(n,0.25)$, $c_5=c_2=5$. \label{fig:TTS_Chi_k2_p25_c1(5)_c2(5)}}
    \end{minipage}
\end{figure}

TTS is a common benchmark used to evaluate the performance of quantum annealers, and is defined as the expected time required to find a ground state of the desired Hamiltonian with a level of confidence $\alpha$, which is set to 95\% in our tests. Formally~\citep[see, e.g.,][]{ronnow2014defining},
\begin{equation}
\label{eq:TTS}
{\rm TTS} = t_{\rm run} \frac{\ln(1-\alpha)}{\ln(1-p)},
\end{equation}
where $t_{\rm run}$, fixed to 20$\mu$s in our tests, is the running time elapsed in a single run of the quantum annealer, and $p$ is the probability of finding the ground state of the desired Hamiltonian. In our tests, $p$ is estimated by running the quantum annealer 1000 times.

\begin{figure}[!htb]
    \centering
    \begin{minipage}[t]{.45\textwidth}
        \centering
        \includegraphics[width=\linewidth]{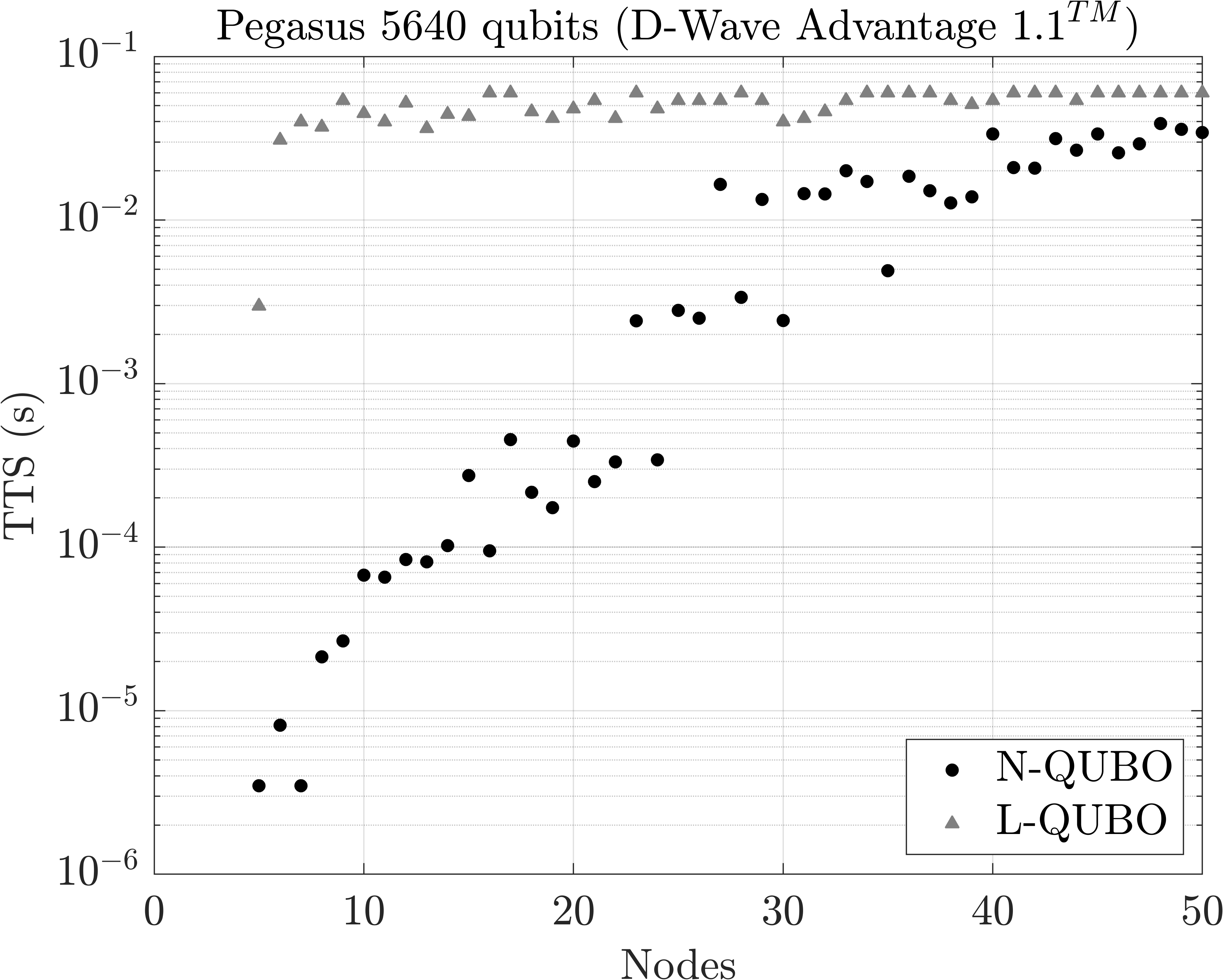}
        \caption{$k=2$, $\G(n,0.25)$, $c_1=c_2=1$. \label{fig:TTS_Peg_k2_p25_c1(1)_c2(1)}}
    \end{minipage}
    \begin{minipage}[t]{0.45\textwidth}
        \centering
        \includegraphics[width=\linewidth]{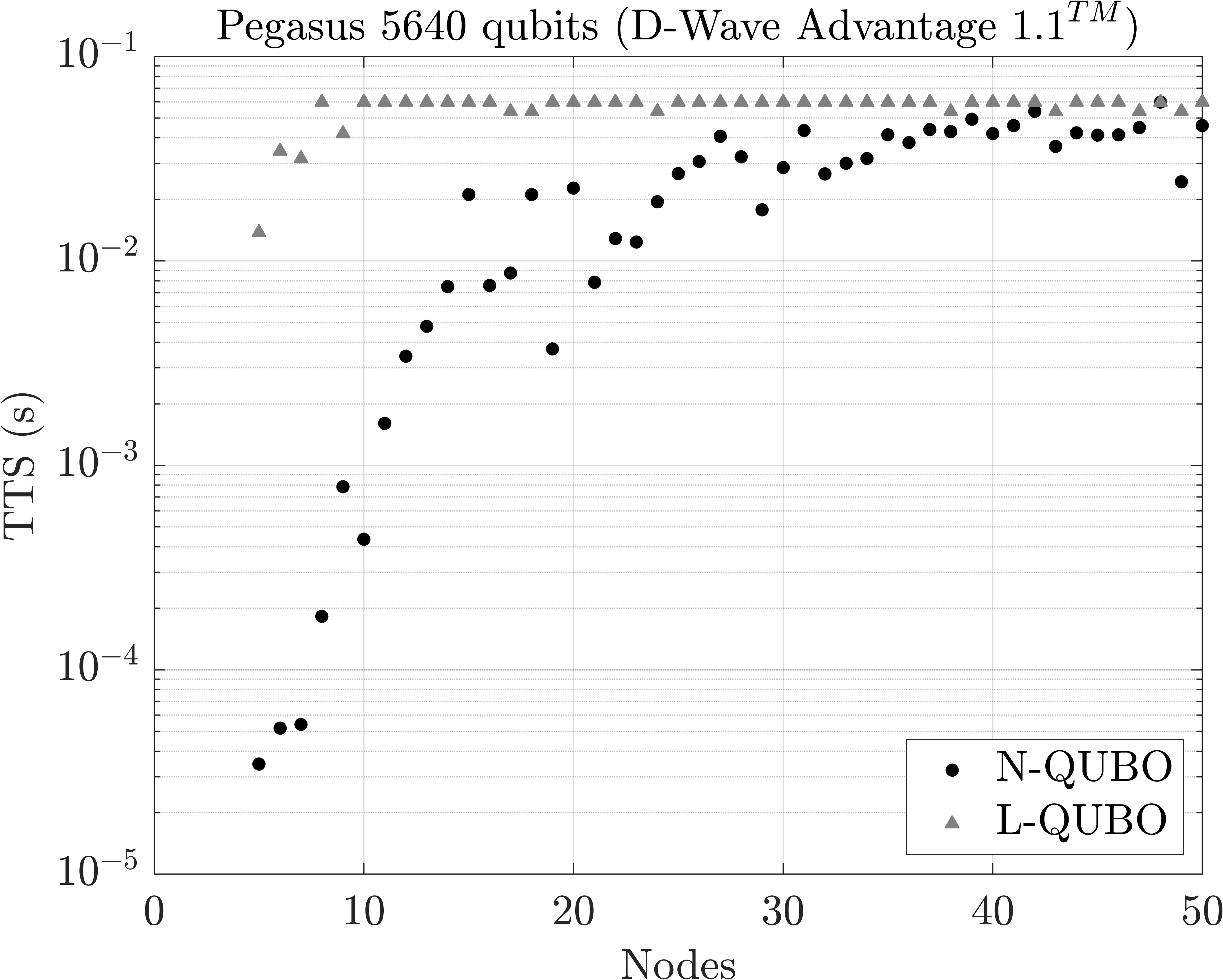}
        \caption{$k=2$, $\G(n,0.25)$, $c_5=c_2=5$. \label{fig:TTS_Peg_k2_p25_c1(5)_c2(5)}}
    \end{minipage}
\end{figure}

From Figures~\ref{fig:TTS_Chi_k2_p25_c1(1)_c2(1)}-\ref{fig:TTS_Peg_k2_p75_c1(5)_c2(5)}, it is clear that regardless of the quantum annealing processor, penalty parameters, or sparsity of the graph, the N-QUBO reformulation of the M$k$CS problem performs substantially better than the  associated L-QUBO reformulation in terms of TTS. For example, note that from Figure~\ref{fig:TTS_Peg_k2_p25_c1(1)_c2(1)} it follows that for sparse graphs (i.e., $p=0.25$), the N-QUBO results in TTS values that are between three (3) orders of magnitude faster, for small graphs, and one (1) order of magnitude faster, for larger graphs, than the associate TTS values for the L-QUBO. 

\begin{figure}[!htb]
    \centering
    \begin{minipage}[t]{.45\textwidth}
        \centering
        \includegraphics[width=\linewidth]{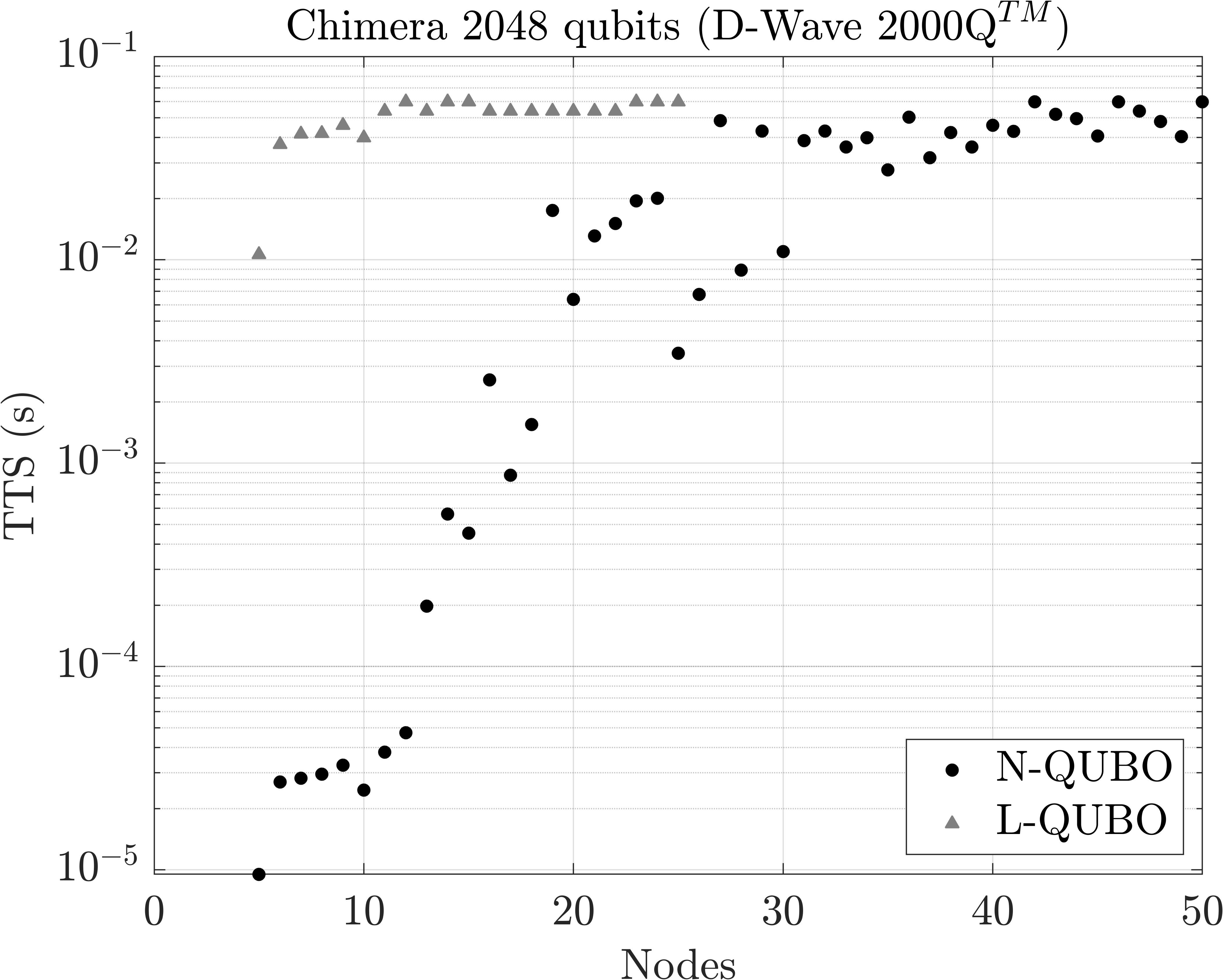}
        \caption{$k=2$, $\G(n,0.75)$, $c_1=c_2=1$. \label{fig:TTS_Chi_k2_p75_c1(1)_c2(1)}}
    \end{minipage}
    \begin{minipage}[t]{0.45\textwidth}
        \centering
        \includegraphics[width=\linewidth]{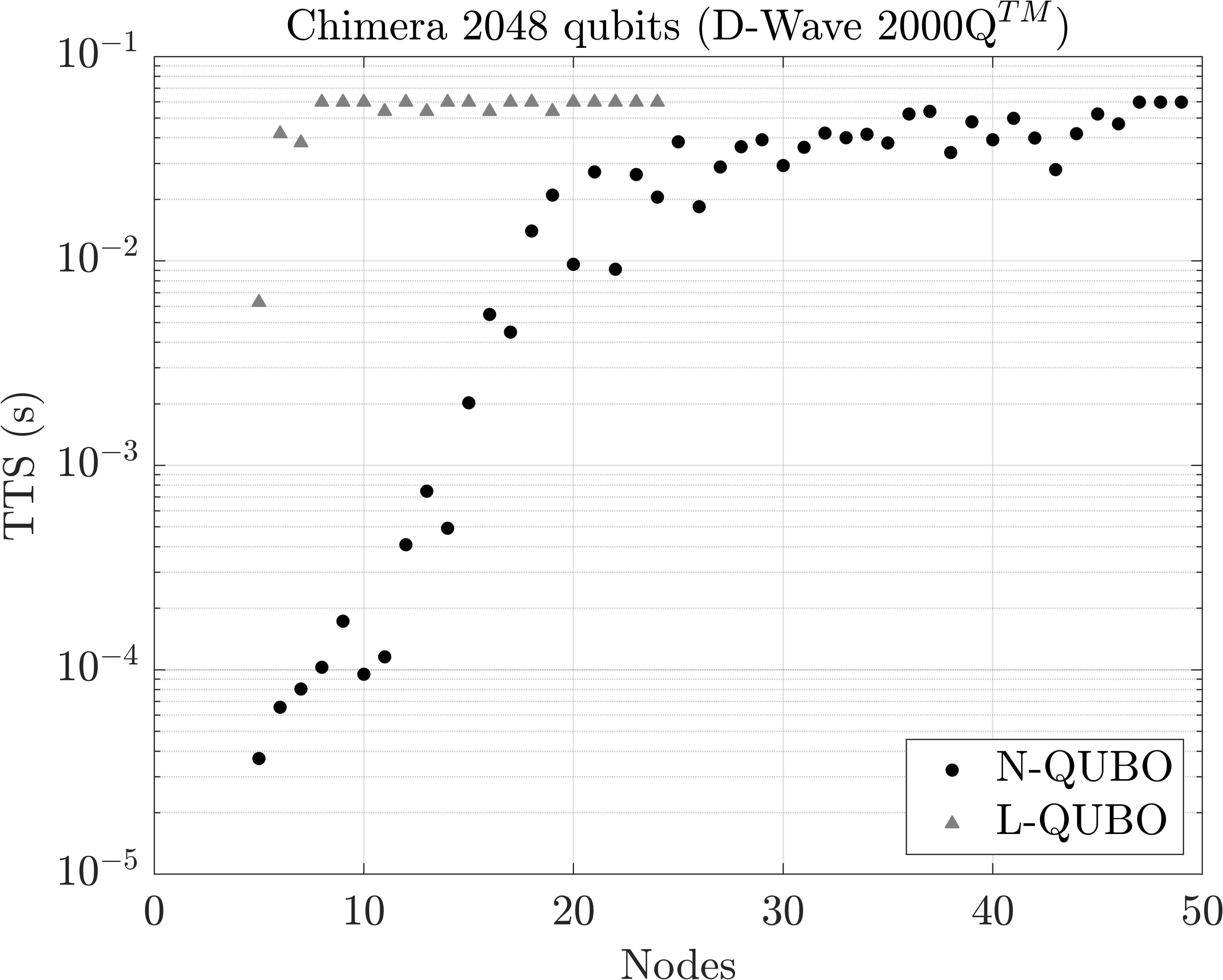}
        \caption{$k=2$, $\G(n,0.75)$, $c_5=c_2=5$. \label{fig:TTS_Chi_k2_p75_c1(5)_c2(5)}}
    \end{minipage}
\end{figure}

From Figure~\ref{fig:TTS_Chi_k2_p75_c1(1)_c2(1)} it is clear that when considering non-sparse graphs (i.e., $p=0.75$) in the 2000Q\textsuperscript{TM} processor, the advantages of the N-QUBO over the L-QUBO in terms of TTS only increase. In particular, notice that while a $t_{\rm run}$ of $20\mu$s is enough to find the optimal solution with some small probability for instances of the M$k$CS problem with underlying graphs of up to $n=50$ nodes. In contrast, once the number of nodes of the underlying graph goes beyond $n=25$, with the L-QUBO the quantum annealer is unable to find any optimal solution in all 1000 runs of $20\mu$s.

\begin{figure}[!htb]
    \centering
    \begin{minipage}[t]{.45\textwidth}
        \centering
        \includegraphics[width=\linewidth]{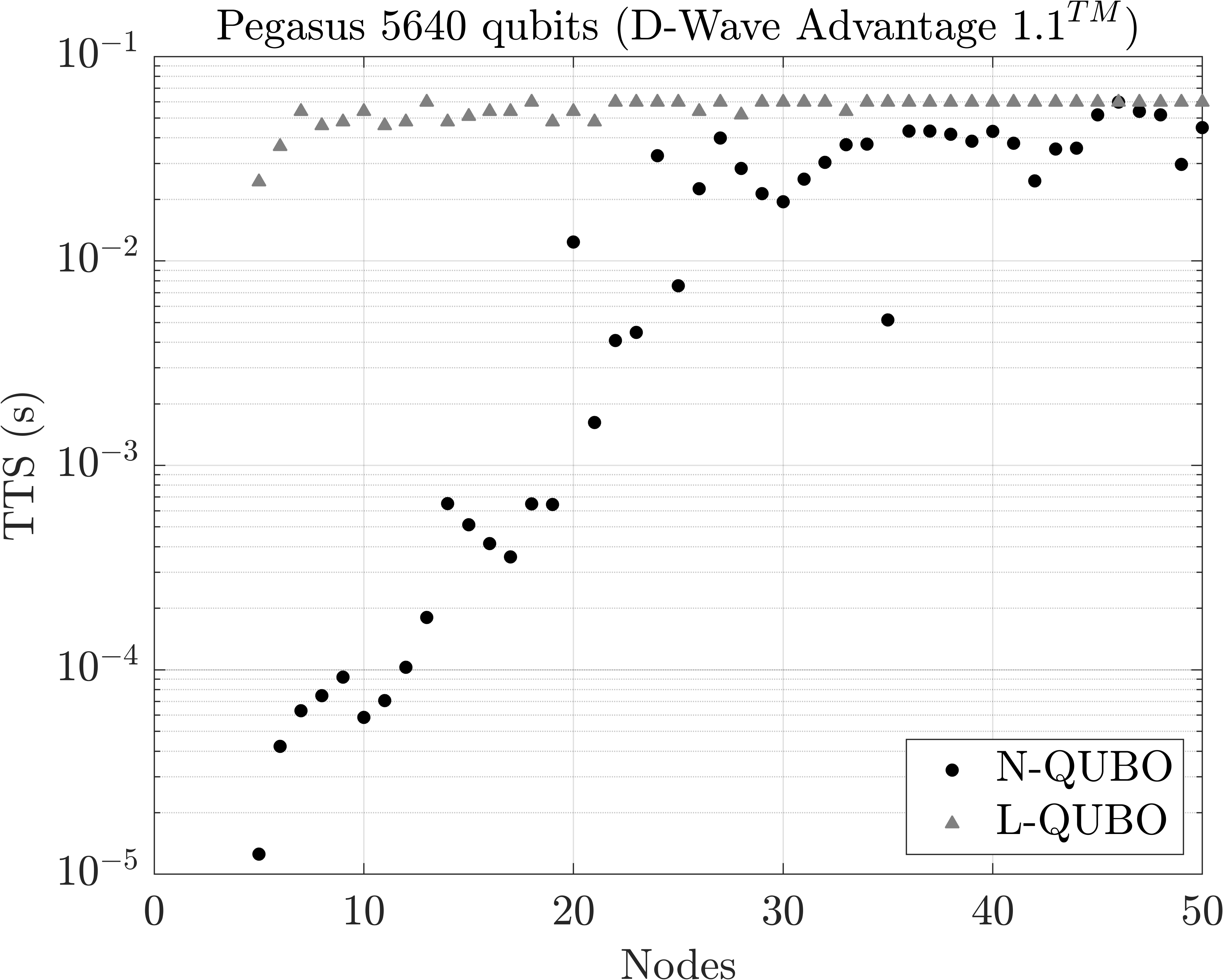}
        \caption{$k=2$, $\G(n,0.75)$, $c_1=c_2=1$. \label{fig:TTS_Peg_k2_p75_c1(1)_c2(1)}}
    \end{minipage}
    \begin{minipage}[t]{0.45\textwidth}
        \centering
        \includegraphics[width=\linewidth]{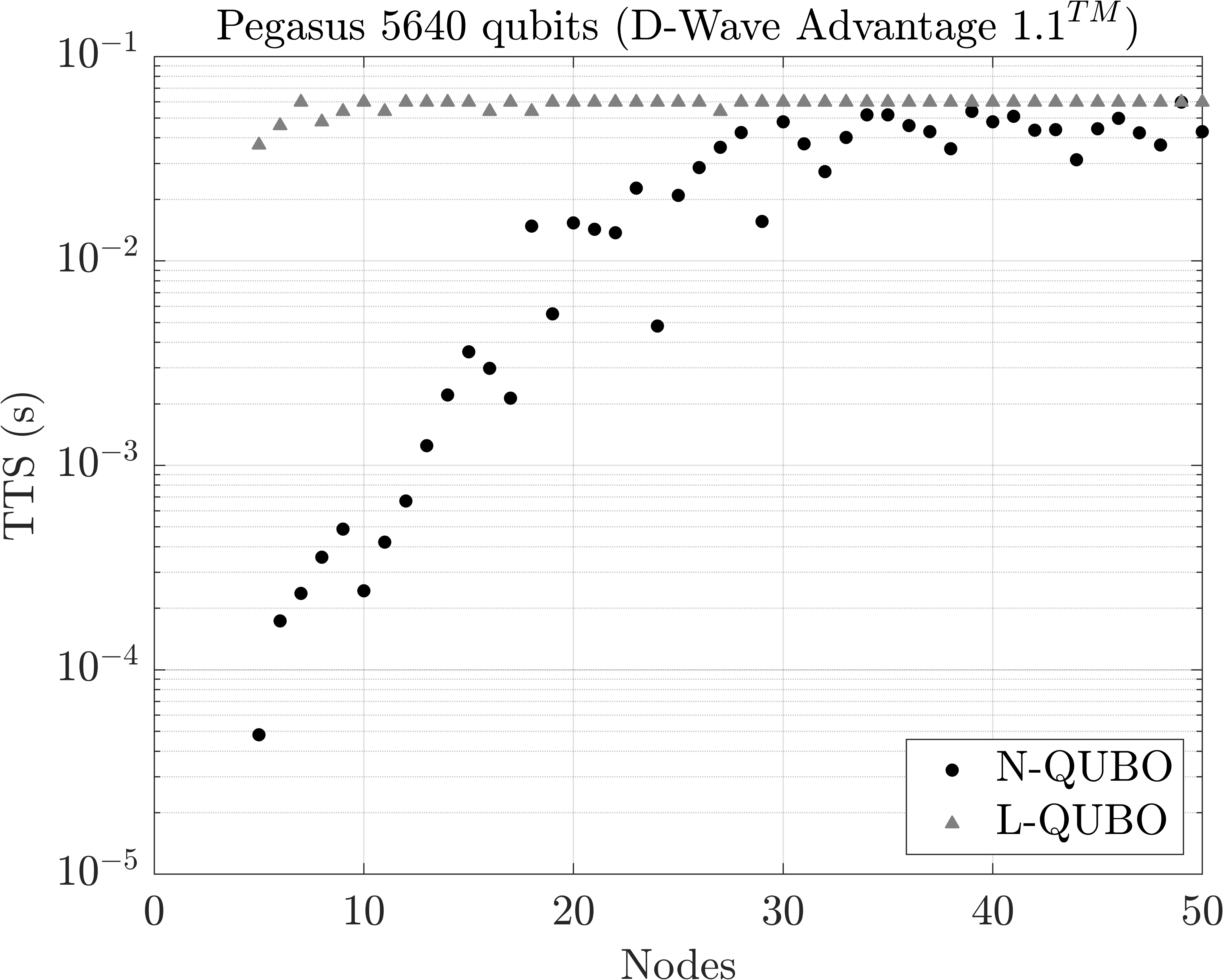}
        \caption{$k=2$, $\G(n,0.75)$, $c_5=c_2=5$. \label{fig:TTS_Peg_k2_p75_c1(5)_c2(5)}}
    \end{minipage}
\end{figure}

Not surprisingly, pairwise comparing Figures~\ref{fig:TTS_Chi_k2_p25_c1(1)_c2(1)},~\ref{fig:TTS_Chi_k2_p75_c1(1)_c2(1)},
with Figures~\ref{fig:TTS_Peg_k2_p25_c1(1)_c2(1)},~\ref{fig:TTS_Peg_k2_p75_c1(1)_c2(1)}, the advantages of the Advantage 1.1\textsuperscript{TM} processor over the 2000Q\textsuperscript{TM} processor in terms of TTS are clear. For the L-QUBO the newer processor finds solutions for much larger instances of the M$k$CS problem. For the N-QUBO, the rate of increase of the TTS as the size of the M$k$CS problem increases is about one order of magnitude lower in the newer processor.

We can also use the results presented in this question to study whether the conclusions made in Section~\ref{sec:mingap}, particularly in Table~\ref{tab:deltas}, reflect on the actual TTS time when using a quantum annealer. Note that from Table~\ref{tab:deltas} it was expected that for instances of the M$k$CS problem with $k=2$ and underlying graphs $\G(n,0.25)$, increasing the penalty constants from $c_1=c_2 =1$ to $c_1=c_2 =5$ would result in a faster convergence. However, by comparing Figures~\ref{fig:TTS_Chi_k2_p25_c1(1)_c2(1)} and~\ref{fig:TTS_Chi_k2_p25_c1(5)_c2(5)}, as well as  Figures~\ref{fig:TTS_Peg_k2_p25_c1(1)_c2(1)} and~\ref{fig:TTS_Peg_k2_p25_c1(5)_c2(5)}, it follows that increasing the penalty constants in this way is actually counterproductive for both quantum annealing processors in terms of TTS (i.e., in Figures~\ref{fig:TTS_Chi_k2_p25_c1(5)_c2(5)} and~\ref{fig:TTS_Peg_k2_p25_c1(5)_c2(5)}, the ``slope'' at which the TTS increases with the number of nodes is higher. Also,  from Table~\ref{tab:deltas} it was expected that for instances of the M$k$CS problem with $k=2$ and underlying graphs $\G(n,0.75)$, the increase in the penalty constants from $c_1=c_2 =1$ to $c_1=c_2 =5$ would have an even slightly higher benefit in terms of speed of convergence (compared with  $\G(n,0.25)$ graphs).  However, by comparing Figures~\ref{fig:TTS_Chi_k2_p75_c1(1)_c2(1)} and~\ref{fig:TTS_Chi_k2_p75_c1(5)_c2(5)}, as well as  Figures~\ref{fig:TTS_Peg_k2_p75_c1(1)_c2(1)} and~\ref{fig:TTS_Peg_k2_p75_c1(5)_c2(5)}, it follows that increasing the penalty constants in this way does not produce discernible improvements for the quantum annealing processors in terms of TTS.
Most likely, this means that any theoretical advantages in terms of convergence obtained by increasing the value of penalty constants is off-set by the precision problems that using larger penalty parameters brings for the quantum annealing processors in practice. The fact that being able to use penalty parameters close to one (1) is beneficial for quantum annealers is discussed, for example, in~\citep{vyskovcil2019embedding}. This shows the importance of being able to fully characterize the range of penalty constants that result in a QUBO being a reformulation of a COPT problem.


%
%

\section{Concluding remarks}
\label{sec:end}
In this paper, we consider a particularly important combinatorial optimization (COPT) problem; namely, the maximum $k$-colorable subgraph (M$k$CS) problem, in which the aim  is to find an induced $k$-colorable subgraph
with maximum cardinality in a given graph. 
This problem arises in channel assignment in spectrum sharing networks (e.g., Wi-Fi or cellular), VLSI design, human genetic research, cybersecurity, cryptography, and scheduling.
We derive two QUBO reformulations for the M$k$CS problem; a linear-based QUBO reformulation (Theorem~\ref{thm:linearQUBO}) and a nonlinear-based QUBO reformulation (Theorem~\ref{thm:nonlinearQUBO}). 
Furthermore, we 
fully characterize the range of the penalty parameters that can be used in the QUBO reformulation. In the case of the linear-based QUBO reformulation, this analysis shows that Theorem~\ref{thm:linearQUBO} provides a better QUBO reformulation for the M$k$CS problem than the one that could be obtained using the QUBO reformulation techniques recently introduced by~\citet{lasserre2016max}. In the case of the nonlinear-based QUBO reformulation, this analysis shows that Theorem~\ref{thm:nonlinearQUBO} provides a better QUBO reformulation for the M$k$CS problem than the one that could be obtained using the well-known QUBO reformulation techniques introduced by~\citet{lucas2014ising}. Our proofs bring forward a fact that is overlooked in related articles. Namely, that when minimal penalty parameters are used in QUBO reformulations, the equivalence in terms of objective value between a problem and its associated QUBO reformulation does not necessarily mean that the optimal solution of the QUBO reformulation provides a feasible, optimal solution for the original problem. This is shown to be the case for the M$k$CS problem in Corollaries~\ref{cor:unitnonlinear} and~\ref{cor:unitlinear}. Given that for $k=1$ the M$k$CS problem is equivalent to the stable set problem, we show (see end of Section~\ref{sec:nonlinear}) that  this issue applies to the well-known QUBO reformulation of the stable set problem~\eqref{eq:alphaQUBO}. However, we show that this issue can be be simply addressed by using the greedy Algorithm~\ref{alg:AlgorithmA} (for general instances of the M$k$CS problem).

We finish in Section~\ref{sec:benchmark} by illustrating the advantages of the nonlinear-based QUBO reformulation over the linear-based QUBO reformulation in terms of embedding requirements, convergence rate, and time-to-solution when the QUBO reformulations are used to solve the M$k$CS problem in a quantum annealing device. The experiments also illustrate the importance of having a full characterization of the penalty parameters that ensure the proposed QUBOs are indeed reformulations of the original problem. For example, we explore the potential theoretical and practical gains of using higher penalty parameters than the minimum ones required for the QUBO to become a reformulation of the M$k$CS problem. Our results show that although there are some theoretical benefits of using larger than minimal penalty parameters, they do not translate to a faster convergence to a solution of the problem on a quantum annealing computing device.

Our results contribute to recent literature that beyond obtaining QUBO reformulations of COPT problems such as the graph isomorphism problem as well as tree and cycle elimination problems, look for {\em improved} QUBO reformulations of these problems for NISQ devices~\citep[see, e.g.,][]{calude2017, hua2020improved, fowler2017improved, verma2020optimal, verma2020penalty}. That is, QUBO reformulations that are tailored to be more efficiently used in NISQ devices.

\section*{Acknowledgements} This project has been carried out thanks to funding by the Defense Advanced Research Projects Agency (DARPA), ONISQ grant W911NF2010022, titled {\em The Quantum
Computing Revolution and Optimization: Challenges and Opportunities}. The project was also supported by the Oak Ridge National Laboratory OLCF grant ENG121, which provided the authors with in-kind access to D-Wave's quantum annealers. The second author acknowledges the support of the Center for Advanced Process Decision Making (CAPD) at Carnegie Mellon University.

\bibliographystyle{apalike}
\bibliography{QUBO_BibTeX}

\end{document}